\documentclass[pdflatex,sn-mathphys-num]{sn-jnl}% Basic Springer 

\usepackage{graphicx}%
\usepackage{multirow}%
\usepackage{amsmath,amssymb,amsfonts}%
\usepackage{amsthm}%
\usepackage{mathrsfs}%
\usepackage[title]{appendix}%
\usepackage{xcolor}%
\usepackage{textcomp}%
\usepackage{manyfoot}%
\usepackage{booktabs}%
\usepackage{listings}%
\usepackage[ruled]{algorithm2e}

\usepackage{algorithmic}

\usepackage{diagbox}

\usepackage{subcaption}   % 导言区加载

\theoremstyle{thmstyleone}%
\newtheorem{theorem}{Theorem}%  meant for continuous numbers
\newtheorem{proposition}[theorem]{Proposition}% 

\theoremstyle{thmstyletwo}%
\newtheorem{remark}{Remark}%

\theoremstyle{thmstylethree}%
\newtheorem{definition}{Definition}%

\newtheorem{lemma}{Lemma}

\begin{document}

% \title[Group Stepdown SLOPE for Controlled Feature Selection]{Group Stepdown SLOPE for Controlled Feature Selection with Group Structures}

\title{Beyond False Discovery Rate: A Stepdown Group SLOPE Approach for Grouped Variable Selection}

\author[1]{Xuelin Zhang}\email{zhangxuelin@webmail.hzau.edu.cn}

\author[1]{Jingxuan Liang}\email{1908103270@qq.com}

\author[1]{Xinyue Liu}\email{lxy36@webmail.hzau.edu.cn}

\author[1]{Hong Chen}\email{chenh@mail.hzau.edu.cn}

\author[1]{Biqin Song}\email{biqin.song@mail.hzau.edu.cn}
% \equalcont{Corresponding author.}

\affil[1]{\orgdiv{College of Informatics}, \orgname{Huazhong Agricultural University}, \orgaddress{\city{Wuhan}, \postcode{430070}, \state{Hubei Province}, \country{China}}}

% \affil[2]{\orgdiv{Engineering Research Center of Intelligent Technology for Agriculture}, \orgname{Ministry of Education}, \orgaddress{\city{Wuhan}, \postcode{430070}, \state{Hubei Province}, \country{China}}}

\abstract{
High-dimensional feature selection is routinely required to balance statistical power with strict control of multiple-error metrics such as the k-Family-Wise Error Rate (k-FWER) and the False Discovery Proportion (FDP), yet some existing frameworks are confined to the narrower goal of controlling the expected False Discovery Rate (FDR) and can not exploit the group-structure of the covariates, such as Sorted L-One Penalized Estimation (SLOPE). We introduce the Group Stepdown SLOPE, a unified optimization procedure which is capable of embedding the Lehmann–Romano stepdown rules into SLOPE to achieve finite-sample guarantees under k-FWER and FDP thresholds. Specifically, we derive closed-form regularization sequences under orthogonal designs that provably bound k-FWER and FDP at user-specified levels, and extend these results to grouped settings via gk-SLOPE and gF-SLOPE, which control the analogous group-level errors gk-FWER and gFDP. For non-orthogonal general designs, we provide a calibrated data-driven sequence inspired by Gaussian approximation and Monte-Carlo correction, preserving convexity and scalability. Extensive simulations are conducted across sparse, correlated, and group-structured regimes. Empirical results corroborate our theoretical findings that the proposed methods achieve nominal error control, while yielding markedly higher power than competing stepdown procedures, thereby confirming the practical value of the theoretical advances. 
% The implementation is publicly available at \url{https://github.com/zhangxuelincode/sdGSLOPE}.
}

\keywords{Controlled feature selection, Stepdown procedure, Group SLOPE, Group structure}

\maketitle

\section{Introduction}
In specific statistical applications of high-dimensional signal processing, only a subset of features is crucial for predicting outcomes, and relevant variables can be grouped, where all variables in each subgroup will be selected or discarded. Thus, it's vital to identify critical features from sparse data to enhance computational efficiency and provide interpretable predictions \citep{wu2023automated,su2025toward}.
Feature selection for a single regressor or at the group level aims to identify informative features or feature subsets from high-dimensional empirical observations, a key research field in machine learning. Typical feature selection methods include sparse linear models (e.g., Lasso \citep{lasso}, Group Lasso \citep{yuan2006model}, Sparse Group Lasso \citep{sglasso-NIPS19}), sparse additive models (e.g., SpAM \citep{spam}, GroupSAM \citep{chenh-nips17}, SpMAM \citep{chenh-tnnls2020} and NAMs \citep{zhang2024gaussian,zhu2024neural,zhang2025interpretable}), tree-based models (e.g., random forest \citep{randomforest}), and sparse neural networks (e.g., LassoNet \citep{lemhadri-2021lassonet}, Deep feature selection \citep{taherkhani2018deep}, Deep-Pink \citep{deeppink}, multi-scale feature selection network \citep{li2024multi}). Nevertheless, although most of these approaches exhibit good empirical performance in feature selection, the theoretical guarantees regarding the false discovery rate (FDR) are often lacking, which is extremely important.

Following this line of research, controlled feature selection further improves selection quality while guaranteeing low FDR, which has recently attracted increasing attention due to its wide applications, e.g., in bioinformatics and biomedical processing \citep{aggarwal-2016biofdr, yu-2021biofdr}. 
There are mainly three branches of learning systems for controlled feature selection: the multiple hypothesis test \citep{benjamini-1995bh, ferreira-2006bh, kFWER2, stepup}, the knockoffs filter \citep{barber-2015knockoffs, Cands2016PanningFG, barber-2020knockoffs,romano-2020knockoffs}, and the Sorted L-One Penalized Estimation (SLOPE) \citep{slope, slopestatics, Groupslope}. As a classic strategy for feature selection, the Benjamini-Hochberg (BH) procedure is formulated by jointly considering p-values from multiple hypothesis tests \citep{benjamini-1995bh}. 
Despite this procedure enjoying attractive theoretical properties for FDR control, it may face computational challenges in nonlinear and complex regression estimation \citep{javanmard2019false}. As a novel feature filter scheme, knockoff inference has solid theoretical foundations and demonstrates competitive performance in real-world applications \citep{barber-2015knockoffs,barber-2020knockoffs,zhao-2022error,yu-2021biofdr}. In particular, an error-based knockoffs inference framework is formulated in \citep{zhao-2022error} to further realize controlled feature selection from the perspectives of the probability of $k$ or more false rejections ($k$-FWER) and the false discovery proportion (FDP). Different from screening out the active feature with the help of knockoff features, SLOPE focuses on the regularization design for sparse feature selection, which adaptively imposes a non-increasing sequence of tuning parameters on the sorted $\ell_1$ penalties \citep{slope, Groupslope, abslope}. This is further extended to select feature subsets by exploiting group-level sparsity regularization, referred to as g-SLOPE \citep{Groupslope}.
Although there has been rapid progress on the optimization algorithm of SLOPE \citep{slope, Groupslope} and its theoretical properties \citep{slopestatics}, most existing works on SLOPE and g-SLOPE are limited to FDR control alone. 
Motivated by this limitation, we further examine the grouping structure and develop new Group SLOPE approaches for controlled group-wise feature selection under $k$-FWER and FDP criteria. Moreover, we conduct sufficient experiments under different configurations of orthogonal or Gaussian designs, levels of sparsity and numbers of relevant features (and groups) to validate the efficiency of the proposed new approaches for group-wise feature selection.

\begin{table*}[!t]
\centering
\caption{Algorithmic properties ($\surd$ - has the given information, $\times$ - hasn’t the given information).}
\resizebox{\linewidth}{!}{
\begin{tabular}{l|ccccccc}
\hline
\diagbox{Algorithms}{Properties}  & FDR$^1$ Control & FDP$^1$ Control & $k$-FWER$^1$ Control & Group-wise\\ \hline
SLOPE\citep{slope,slopestatics,larsson2020strong,dupuis2024solution,minsker2024robust}    & $\surd$   & $\times$& $\times$   & $\times$ \\
g-SLOPE \citep{Groupslope,wei2021sparse,chen2023group,feser2023sparse}   & $\surd$   & $\times$ & $\times$   & $\surd$   \\
Stepdown (FDP) \citep{kFWER2}  & $\times$    &$\surd$   & $\times$   & $\times$       \\
Stepdown ($k$-FWER) \citep{kFWER2}& $\times$    & $\times$  & $\surd$    &$\times$      \\ 
F-SLOPE (Ours)& $\surd$    & $\surd$   & $\times$  & $\times$     \\ 
$k$-SLOPE (Ours)& $\surd$    & $\times$ & $\surd$    &$\times$     \\ 
gF-SLOPE (Ours)& $\surd$    & $\surd$& $\times$    & $\surd$    \\ 
g$k$-SLOPE (Ours)& $\surd$    & $\times$  & $\surd$    & $\surd$     \\ 
\hline
\end{tabular}
}
\label{table1_compare}
\footnotetext[1]{The group-level definitions of these three criteria (gFDR, gFDP and g$k$-FWER) are considered with respect to g-SLOPE, gF-SLOPE and g$k$-SLOPE.}
\end{table*}

%\subsection{Main Contribution}
To fill this gap, we propose new SLOPE and g-SLOPE approaches, namely $k$-SLOPE, F-SLOPE, g$k$-SLOPE, and gF-SLOPE, which enable feature selection at either a single regressor or grouped features with $k$-FWER and FDP control, respectively. In contrast to the previous methods relying on the BH procedure \citep{benjamini-1995bh}, the proposed SLOPEs and g-SLOPEs depend on the stepdown procedure \citep{kFWER2}, which offers greater feasibility and adaptivity \citep{slope,slopestatics, Groupslope} as summarized in Table \ref{table1_compare}. The main contributions of this paper are outlined as follows:
\begin{itemize}
\item \emph{New SLOPEs for the k-FWER and the FDP control}. We integrate the SLOPE \citep{slope} and the stepdown procedure \citep{kFWER2} in a coherent manner for $k$-FWER and FDP control, and formulate the corresponding convex optimization problems. Similar to the flexible knockoffs inference in \citep{zhao-2022error}, our approaches can also avoid the complex calculations for p-value and can be implemented feasibly.

\item \emph{New g-SLOPEs for the g$k$-FWER and gFDP control.} We extend the idea of $k$-SLOPE and F-SLOPE to deal with the circumstance when one aims at selecting the whole subsets of explanatory variables with group structures instead of single regressors \citep{Groupslope}. 

\item \emph{Theoretical guarantees and empirical effectiveness}. 
Under the orthogonal design setting, the $k$-FWER and FDP can be provably controlled at a prespecified level for the proposed SLOPEs and $g$-SLOPEs, respectively. 
Extensive simulation experiments validate the effectiveness and flexibility of the stepdown SLOPEs for $k$-FWER and FDP control, as well as the group stepdown SLOPEs for g$k$-FWER and gFDP control, thereby verifying our theoretical findings. 
\end{itemize}

\section{Related Work}\label{sec2}
To better emphasize the novelty of the proposed approaches, we review the related SLOPE \citep{slope} and Group SLOPE \citep{Groupslope} methods as well as the relationship among FDR, $k$-FWER and FDP criteria.

{\bf{SLOPE} and \bf{Group SLOPE} Methods.}
SLOPE \citep{slope} can be considered as a natural extension of Lasso \citep{lasso}, where the regression coefficients are penalized according to their ranks. 
One notable choice of the regularization sequence $\{\lambda_i\}$ is given by the BH \citep{benjamini-1995bh} critical values $\lambda_{\mathrm{BH}}(i)=\Phi^{-1}(1-\frac{iq}{2m})$, where $q \in (0,1)$ is the desired FDR level, $m$ is the characteristic number and $\Phi(\cdot)$ is the cumulative distribution function of a standard normal distribution. The primary motivation behind SLOPE is to provide finite-sample guarantees for regression estimation and FDR control, where FDR is the expected proportion of irrelevant regressors among all selected predictors. When $X$ is an orthogonal matrix, SLOPE with $\lambda_{\mathrm{BH}}$ controls FDR at the desired level in theory. Additionally, a notable property is that SLOPE does not require knowledge of the sparsity level, yet it automatically yields optimal total squared errors across a wide range of $\ell_0$-sparsity classes. 

To improve computational efficiency, a sparse semismooth Newton-based augmented Lagrangian method was proposed to solve the more general SLOPE model \citep{luo2019solving}. A heuristic screening rule for SLOPE, based on the strict rules for Lasso, was first introduced to improve the computational efficiency of SLOPE, particularly for estimating the complete regularization path \citep{strongslope}. And Larsson et al. also proposed a new fast algorithm to solve the SLOPE optimization problem, which combined proximal gradient descent and proximal coordinate descent steps \citep{larsson}. Besides the above works on algorithm optimization, there are extensive studies on SLOPE with properties \citep{slopestatics,bellec,kos}, model improvements \citep{Groupslope,lee,riccobello,abslope} and applications \citep{brzyski,kremer}. 
Liang et al. \citep{liang2023stepdown} (prior conference version of this article) improved the classical SLOPE by controlling other criteria, such as $k$-FWER and FDP, and demonstrated both empirical performance and theoretical guarantees.
In addition, to select variables where relevant features can be grouped, Brzyski et al. \citep{Groupslope} developed an advanced Group SLOPE ($g$-SLOPE) to achieve controlled group-wise feature selection.
However, to the best of our knowledge, no previous studies have investigated group-wise SLOPE-based feature selection with $k$-FWER or FDP control guarantees.

{\bf{Statistical Metrics: FDR, $k$-FWER and FDP}.}
Benjamini and Hochberg formulated the BH procedure to control the FDP expectation, also known as \emph{FDR control} \citep{benjamini-1995bh}. Subsequently, Lehmann and Romano proposed both the single-step and stepdown procedures to ensure $k$-FWER control \citep{kFWER2}. They also considered the FDP control and provided two stepdown procedures for controlling the FDP under mild conditions, with or without dependence assumptions on the p-value structure. With the help of the stepdown procedures \citep{kFWER2}, several studies have been conducted on feature selection using the $k$-FWER control \citep{stepup,romano2007control,aleman2017effects,zhao-2022error} and the FDP control \citep{stepup,romano2007control,delattre2015new,zhao-2022error}. However, most of these procedures may depend on p-values to assess the importance of each feature or the validity of the underlying structures. Moreover, the traditional calculation of p-values usually relies on large-sample asymptotic theory, which may no longer hold in high-dimensional finite samples \citep{Cands2016PanningFG}. 

Furthermore, it is of great importance to explain the relationship between FDR, FDP, and $k$-FWER. Given $\gamma,\alpha\in (0,1)$, the FDP control means the $\rm{Prob}(\mathrm{FDP}>\gamma)$ at the level $\alpha$. Recall that the FDP concerns \begin{equation}
\rm{Prob}\{\mathrm{FDP}>\gamma\}<\alpha,
\end{equation}
and FDR is the expectation of FDP, i.e., $\mathrm{FDR}=\mathbb{E}\mathrm{(FDP)}$. 
Then following \citep{slope,slopestatics}, we can verify
\begin{equation}
\begin{aligned}
\mathrm{FDR}
\leq \gamma \rm{Prob}\{\mathrm{FDP} \leq \gamma\}+\rm{Prob}\{\mathrm{FDP}>\gamma\},
\end{aligned}
\end{equation}
and
\begin{equation}
\frac{\mathrm{FDR}-\gamma}{1-\gamma} \leq \rm{Prob}\{\mathrm{FDP}>\gamma\} \leq \frac{\mathrm{FDR}}{\gamma},
\end{equation}
where the last inequality follows from Markov’s inequality. If a method controls FDR at level $q$, then it also controls $\mathrm{FDP}\leq q/\gamma$. Conversely, if the FDP is controlled, i.e. $\rm{Prob}(\mathrm{FDP}>\gamma)<\alpha$, and then the FDR is bounded by $(1-\gamma)\alpha+\gamma$. Therefore, a procedure with FDP control can often control the FDR \citep{vanderLaanDudoitPollard}. Moreover, Farcomeni pointed out that, compared with the FDR control, the $k$-FWER control can be more desirable where powerful selection results can be obtained \citep{Farcomen} in settings where more powerful selection results are required \citep{Farcomen}.
\section{Preliminaries}
\label{sec3}
This section recalls some necessary backgrounds involved in this paper, e.g., the criterion definitions (for FDP, FDR and $k$-FWER), classical sparse regressors (including Lasso \citep{lasso}, Group Lasso \citep{group}, SLOPE \citep{slope}), Group SLOPE \citep{Groupslope} and the stepdown procedure \citep{kFWER2}.

\subsection{Problem Formulation}
Let $\mathcal{X}\subset\mathbb{R}^m$ and $\mathcal{Y}\subset \mathbb{R}$ be the compact input space and corresponding output space, respectively. Considering samples $\{(x_i,y_i)\}_{i=1}^n$ are independently drawn from an unknown distribution on $\mathcal{X} \times \mathcal{Y}$. Denote 
\begin{equation}
X:=(X_1,X_2,\cdots,X_n)^T\subset\mathbb{R}^{n\times m}
\end{equation}
and 
\begin{equation}
y=(y_1,y_2,\cdots,y_n)^T\in\mathbb{R}^n.
\end{equation}

The module length of each column vector of $X$ is equal to 1. The following multiple linear regression model generates the output vector $y$:
\begin{align}
y = X\beta+\epsilon,\label{linear}
\end{align}
where $\beta\in\mathbb{R}^m$ represents the coefficient vector and $\epsilon \sim N(0,\sigma^2I_n)$. 

In sparse high-dimensional regression, we usually assume that $\beta$ satisfies a sparse structure. Let $V$ be the number of false selected features and let $R$ be the total number of identified features. The FDP, FDR, and $k$-FWER are respectively defined as 
\begin{equation}
\begin{aligned}
&\mathrm{FDP}=\frac{V}{\max\{R,1\}},~~ \mathrm{FDR}=\mathbb{E}(\mathrm{FDP}),
\end{aligned}
\end{equation}
and
\begin{equation}
\begin{aligned}
&k\text{-}\mathrm{FWER}={\rm{Prob}}\{V\geq k\}. 
\end{aligned}
\end{equation}

\subsection{Lasso and SLOPE}
Tibshirani et al. developed the Lasso model \citep{lasso} to remove irrelevant, redundant, or strongly correlated features from high-dimensional data, thereby selecting the remaining important features for model construction. This approach plays a vital role in machine learning and statistical modeling. The mathematical learning object of Lasso regression is defined as follows,
\begin{align}
\mathop{\arg\min}_{\beta\in \mathbb{R}^m} \left\{\frac{1}{2}||y-X\beta||^2+\lambda_L \|\beta\|_1\right\},
\label{eq_lasso}
\end{align}
where $\beta$ is the regression coefficient. The $\lambda_L$ is the fixed penalty coefficient of the sparse $l_1$ norm, which represents the trade-off between model fitting and the sparsity effect. The proper setting of $\lambda_L$ is a challenging task, as Bogdan et al. \citep{slope} have discovered. Specifically, when using the Cross-Validation strategy to find a suitable $\lambda_L$, Lasso performs poorly at feature selection, leading to relatively high FDR in high-dimensional data.

By replacing the $\ell_1$ penalty in \eqref{eq_lasso} with the sorted $\ell_1$ penalty, Bogdan et al. proposed the SLOPE method \citep{slope} for controlled feature selection in the high-dimensional sparse cases. The learning scheme of SLOPE is formulated as 
\begin{equation}
\mathop{\arg\min}_{\beta\in \mathbb{R}^m} \left\{\frac{1}{2}||y-X\beta||^2+J_\lambda(\beta)\right\},\label{SLOPEeq}
\end{equation}
where $J_\lambda(\beta)=\sum_{i=1}^m\lambda_i|\beta|_{(i)}$ is the sorted $l_1$ norm, the regularization parameters $\lambda_{1}\geq\cdots\geq\lambda_{m}\geq0$ and the regression coefficients $|\beta|_{(1)}\geq |\beta|_{(2)}\geq\cdots\geq |\beta|_{(m)}$ are all non-negative and non-decreasing sequences. When $\lambda_1=\lambda_2=\cdots=\lambda_m$, the optimizing scheme (\ref{SLOPEeq}) obviously reduces to the Lasso \eqref{eq_lasso}. 

Given a desired level $q$, SLOPE controls FDR by exploiting the sequence of parameters 
\begin{equation}
\mathrm{\lambda_{BH}=\{\lambda_{BH}(1),\lambda_{BH}(2),\cdots,\lambda_{BH}}(m)\},
\end{equation}
with
\begin{equation}
\mathrm{\lambda_{BH}}(i)=\sigma\cdot\Phi^{-1}(1-\frac{i q}{2m})\label{lambdaBH},
\end{equation}
where $\Phi(\cdot)$ denotes the cumulative distribution function of the standard normal distribution under orthogonal design. 

\begin{lemma} \label{theorem1}
\citep{slope} In the linear model with the orthogonal design $X$ and $\epsilon\sim N (0,\sigma^2 I_n)$, the SLOPE \eqref{SLOPEeq} with the regularization parameter sequence (\ref{lambdaBH}) satisfies 
$\mathrm{FDR} \leq \frac{m_{0} q}{m}$,
where $m_0$ is the number of true null hypotheses and $q$ is the desired FDR level.
\end{lemma}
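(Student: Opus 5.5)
Under the orthogonal design the problem decouples. Since $X^TX=I_m$, the objective \eqref{SLOPEeq} equals, up to a constant, $\frac12\|\tilde y-\beta\|^2+J_\lambda(\beta)$ with $\tilde y=X^Ty=\beta+z$ and $z\sim N(0,\sigma^2 I_m)$. So SLOPE becomes the proximal operator of the sorted $\ell_1$ norm applied to $\tilde y$. We may rescale to $\sigma=1$ and write $\lambda_i=\Phi^{-1}(1-iq/2m)$.

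\textbf{Step 1: deterministic description of the support.} We need two facts about the prox of $J_\lambda$.
\begin{itemize}
\item[(a)] The number $R$ of nonzero estimates satisfies: $|\tilde y|_{(R)}>\lambda_R$, and for every $r>R$ it is not the case that $|\tilde y|_{(j)}>\lambda_j$ holds for all $j\le r$. This is the ``sandwich'' between step-down and step-up BH.
\item[(b)] Index $i$ is selected if and only if $|\tilde y_i|>\lambda_R$, given $R$. More precisely, $\{i:\hat\beta_i\ne0\}$ consists of the $R$ largest $|\tilde y_i|$.
\end{itemize}
Both follow from the subgradient / KKT characterization of the sorted-$\ell_1$ prox. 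The solution is a clustered soft-thresholding, and the last nonzero cluster must exceed the cumulative averaged thresholds. I would prove (a) and (b) as a lemma, following Bogdan et al.

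\textbf{Step 2: FDR decomposition.} Write
\[
\mathrm{FDR}=\sum_{r=1}^m\frac1r\sum_{i\in H_0}\mathbb P(i\text{ selected},\,R=r),
\]
where $H_0$ is the null set with $|H_0|=m_0$. For a null $i$, the event $\{i\text{ selected},R=r\}$ is contained in $\{|z_i|>\lambda_r\}\cap\{R=r\}$. The key step is to decouple $R$ from $z_i$. Let $R_i$ be the number of selections made by SLOPE applied to $\tilde y^{(i)}$, the vector with the $i$th entry replaced by $+\infty$ (equivalently, use $\lambda$ with index $i$ forced in). Using (a) and (b), one shows that on $\{|\tilde y_i|>\lambda_r\}$ we have $R=r$ if and only if $R_i=r$. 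Moreover, $R_i$ is independent of $z_i$ because the noise is independent across coordinates.

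Then
\[
\mathbb P(i\text{ sel},R=r)\le \mathbb P(|z_i|>\lambda_r)\,\mathbb P(R_i=r)=\frac{qr}{m}\,\mathbb P(R_i=r).
\]
Summing over $r$ and $i\in H_0$ gives
\[
\mathrm{FDR}\le \sum_{i\in H_0}\frac qm\sum_r\mathbb P(R_i=r)=\frac{m_0q}{m}.
\]

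\textbf{Main obstacle.} The delicate part is the equivalence ``$R=r \iff R_i=r$'' on the event that the null coordinate exceeds $\lambda_r$. This requires careful handling of the prox's clustering behaviour. When $\tilde y_i$ is raised, the ranks of the other coordinates shift by one, and we must check that the stopping index in (a) is unchanged. I would argue this through monotonicity of the prox in each coordinate, together with the sandwich characterization. Increasing $|\tilde y_i|$ beyond $\lambda_r$ keeps $i$ among the selected coordinates. It also leaves the comparisons $|\tilde y|_{(j)}$ vs.\ $\lambda_j$ that determine $R$ unchanged for $j\le r$, because the index $i$ already occupies a slot whose threshold it exceeds.
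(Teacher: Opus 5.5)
Your overall architecture is the one behind the result the paper cites from Bogdan et al. The paper gives no proof of its own, but it records the decoupling ingredient as Lemma~\ref{lemma2}. Replacing $\tilde y_i$ by $+\infty$ is the same as deleting $\tilde y_i$ and running SLOPE with $(\lambda_2,\dots,\lambda_m)$, since $R_i=1+\tilde R$. Your final bookkeeping with $\mathbb P(|z_i|>\lambda_r)=qr/m$ and $\sum_r\mathbb P(R_i=r)=1$ is also correct.

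The gap is the decoupling step itself. First, the equivalence you assert is false. Take $m=2$, $\sigma=1$, $q=0.1$, so $\lambda_1\approx1.96$ and $\lambda_2\approx1.645$, and let $\tilde y=(1.7,1.7)$. Since $1.7\le\lambda_1$ and $3.4\le\lambda_1+\lambda_2$, we have $\tilde y\in\partial J_\lambda(0)$, so $R=0$. With $\tilde y_1$ replaced by $+\infty$, the second coordinate is soft-thresholded at $\lambda_2<1.7$, so $R_1=2$ even though $|\tilde y_1|>\lambda_2$. This persists on an open set, so it has positive probability.

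You only need the inclusion $\{i\ \text{selected},\,R=r\}\subset\{|z_i|>\lambda_r,\,R_i=r\}$, which is true. However, the tools you propose cannot deliver even that:
\begin{itemize}
\item The sandwich (a) gives only necessary conditions on $R$. The individual comparisons $|\tilde y|_{(j)}$ versus $\lambda_j$ do not determine $R$: in the example both coordinates beat their thresholds in the modified problem, yet the pooled cluster is set to zero in the original one.
\item Monotonicity of the prox in $|\tilde y_i|$ gives $R_i\ge R$, which is the wrong direction.
\item Raising $\tilde y_i$ does shift the sorted entries in positions $\le r$. What stays fixed is the tail, not the head.
\end{itemize}

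The missing idea is the exact characterization of $R$ that comes from the cluster (PAVA) structure of the prox. The solution is the positive part of the non-increasing isotonic fit of $w_l=|\tilde y|_{(l)}-\lambda_l$. Inside each cluster, prefix averages are at most the cluster mean and suffix averages are at least the cluster mean. This gives, almost surely,
\[
R=r\iff \sum_{l=a}^{r}w_l>0\ \text{for all } a\le r \quad\text{and}\quad \sum_{l=r+1}^{b}w_l\le 0\ \text{for all } b\ge r+1 .
\]
On $\{i\ \text{selected},\,R=r\}$, the index $i$ has rank $s\le r$. Sending $\tilde y_i\to+\infty$ weakly increases every sorted entry and leaves the entries in positions $>s$, hence in all positions $>r$, unchanged. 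So the first family of inequalities survives and the second is untouched, which gives $R_i=r$.

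Combined with (b), this is exactly the inclusion you need, and the rest of your computation then goes through. The same display shows why your converse fails: $R_i=r$ carries no information about the first family of inequalities for the original vector.
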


\begin{algorithm}[!t]
\caption{Accelerated proximal gradient algorithm for SLOPE (\ref{SLOPEeq})}
\textbf{Input}: Training set $X\in\mathbb{R}^{n\times m}$ and $y\in\mathbb{R}^n$ and parameter $\lambda=(\lambda_1,\lambda_2,...,\lambda_m)$.\\
\textbf{Initialization}: $a^0\in\mathbb{R}^m$, $b^0=a^0$ and $\theta_0=1$.
% \textbf{Initialization}: $a^0\in\mathbb{R}^m$

\begin{algorithmic}[0] %[1] enables line numbers
\FOR{k = 0,1,$\cdots$}
\STATE $b^{k+1}=\operatorname{prox}_{t_{k} J_{\lambda}}\left(a^{k}-t_{k} X^{T}\left(X a^{k}-y\right)\right)$
\STATE $\theta_{k+1}^{-1}=\frac{1}{2}\left(1+\sqrt{\left.1+4 / \theta_{k}^{2}\right)}\right.$
\STATE $a^{k+1}=b^{k+1}+\theta_{k+1}\left(\theta_{k}^{-1}-1\right)\left(b^{k+1}-b^{k}\right)$
\ENDFOR
\end{algorithmic}
\textbf{Output}: $a$ satisfying the stopping criteria.
\label{accslope}
\end{algorithm}

Lemma \ref{theorem1} illustrates the theoretical guarantee of FDR control for SLOPE equipped with $\lambda_{\mathrm{BH}}$ induced by the BH procedure \citep{benjamini-1995bh}. In this paper, the proposed new SLOPEs and $g$SLOPEs are not limited to the FDR control, but extended to the $k$-FWER and FDP control by replacing the BH procedure with the stepdown procedure \citep{kFWER2}.

From the perspective of computation feasibility, the optimization objective function of SLOPE (\ref{SLOPEeq}) is convex but non-smooth, which can be implemented efficiently by the proximal gradient descent algorithm \citep{slope}. For completeness, we present the computing steps of SLOPE in Algorithm \ref{accslope}, which also applies to our variants of SLOPE. Here, $J_{\lambda}=\sum_{i=1}^m\lambda_i|\beta|_{(i)}$ and the step sizes are selected by backtracking line search and satisfy $t_k<2/||X||^2$ \citep{beck2009fast,becker2011templates}. Moreover, Bogdan et al. also derived concrete stopping criteria through duality theory \citep{slope}.

\subsection{Group Lasso and Group SLOPE}

In real-world applications, features in high-dimensional data often operate in groups, and merely considering single regressors fails to yield effective feature selection. To address such issues, Group Lasso partitions the feature set into groups and employs group-sparse regularization \citep{yuan2006model,simon2013sparse,sglasso-NIPS19}.

Similar to the concept of Group Lasso, the Group SLOPE (g-SLOPE) \citep{Groupslope} extends the classic SLOPE \citep{slope} to solve the case where one aims at selecting whole groups of explanatory variables instead of single regressors. Such groups can be formed by clustering strongly correlated predictors or groups of dummy variables corresponding to different levels of the same qualitative predictor.

Assume $I=\{I_1\cup I_2...\cup I_t\}$ is some partition of the set $\{1,...,n\}$, i.e. $I_i$s are nonempty sets, $I_i\cap I_j=\emptyset $ for $i\neq j$. 
The data generation mechanism with $t$ groups for linear regression is
\begin{equation}
y=\sum_{i=1}^t X_{I_i}\beta_{I_i}+\epsilon,
\label{groupform}
\end{equation}
where $X_{I_i}$ is the submatrix of $X$ composed of columns indexed by $I_i$ and $\beta_{I_i}$ is the restriction of $\beta$ to indices from the set $I_i$. In addition, $X$ can be any matrix, especially any linear relationship within the submatrices $X_{I_i}$ is allowed. 
Assume that $\epsilon \sim N(0,\sigma^2I_n)$ where $\sigma$ represents the standard deviation.
Let the notations $l_1,...,l_t$ refer to the ranks of submatrices $X_{I_1},...,X_{I_t}$.

The frequently used formulation of Group Lasso is defined as
\begin{align}
\mathop{\arg\min}_{\beta\in \mathbb{R}^m}\left\{\frac{1}{2}||y-X\beta||^2+\sigma \lambda_{GL} \sum_{i=1}^t \sqrt{\left|I_i\right|}\left\|\beta_{I_i}\right\|_2\right\},
\end{align}
where $\lambda_{GL}$ is the fixed coefficient of the Group Lasso penalty. $\left|I_i\right|$ denotes the number of elements in set $I_i$, $X_{I_i}$ is the submatrix of $X$ composed of columns indexed by $I_i$ and $\beta_{I_i}$ is the restriction of regression coefficient $\beta$ to indices from $I_i$. The limitations of Group Lasso are similar to those of Lasso, including the high FDR in high-dimensional data. Thus, it's of great importance to integrate the concept of SLOPE and grouped feature selection.

Analogous to the Group Lasso definition and the concept of SLOPE. Define the sorted $\ell_1$-induced penalty $J_\lambda$ as $J_\lambda(v)=\sum_{i=1}^t \lambda_i v_{(i)}, \text{~where~} v_{(1)} \geq v_{(2)} \geq \cdots \geq v_{(t)}$. Then the form of the g-SLOPE \citep{Groupslope} could be formulated as 
\begin{equation}
\beta^{gS} \in \mathop{\arg\min}_{\beta\in \mathbb{R}^m}\frac{1}{2}||y-X\beta||^2+\sigma J_\lambda (W\| \beta \|_{X,I}),\label{GroupSLOPEeq}
\end{equation}
where $\lambda_1, \cdots, \lambda_m$ is a given non-increasing sequence of non-negative tuning parameters. The sparsity-induced penalty $J_\lambda(\beta)=\sum_{i=1}^m\lambda_i|\beta|_{(i)}$ has been defined in Eq.\eqref{SLOPEeq}. $W$ is a diagonal matrix with $W_{i,i}=w_{i}$ and $0$ otherwise, where $w_i$ is the group-wise weight and practically set according to the root of group size $w_i=\sqrt{|I_i|}$ \citep{Groupslope} or $w_i=1/\sqrt{|I_i|}$. 

The defined group-wise regularization $\| \beta \|_{X,I}$ implies
\begin{equation}
\|\beta\|_{X, I}=\left(\left\|X_{I_1} \beta_{I_1}\right\|_2,\left\|X_{I_2} \beta_{I_2}\right\|_2, \ldots,\left\|X_{I_t} \beta_{I_t}\right\|_2\right)^T. \label{GS_group}
\end{equation}

Then FDR no longer applies because the problem being solved has changed. So Brzyski et al. define the group false discovery rate (gFDR) \citep{Groupslope}. 
Let the number of all groups selected by g-SLOPE be $Rg$, and the number of groups falsely discovered by g-SLOPE be $Vg$. Thus we have
\begin{equation}
Rg:=|\{i:||X_{I_i}\beta_{I_i}^{gS}||_2 \neq 0\},
\end{equation}
where $\beta_{I_i}^{gS}$ at i-th group has been defined in Eq.\eqref{GroupSLOPEeq} and \eqref{GS_group}, and there also holds
\begin{equation}
Vg:=|\{i:||X_{I_i}\beta_{I_i}||_2=0, ||X_{I_i}\beta_{I_i}^{gS}||_2\}|\neq 0 \},
\end{equation}
where the estimation of $\| \beta \|_{X,I}$ is defined by the indices corresponding to nonzeros of $\| \beta^{gS}\|_{X,I}$. 
The gFDR is defined as 
\begin{equation}
\mathrm{gFDR}:=\mathbb{E}(\frac{Vg}{max\{Rg,1\}}).
\end{equation}

The goal is to identify the regularized sequences for g-SLOPE so that gFDR can be controlled at any level $\alpha \in(0,1)$. The following is an expansion of the regularization parameter form for the orthogonal and general cases. When the design matrix $X$ is orthogonal matrix, the sequence of regularizing parameters $\lambda_{max}=(\lambda_1^{max},\lambda_2^{max},...,\lambda_m^{max})^{T}$ is define as 
\begin{equation}
\lambda_i^{max}:=\max_{j=1,...,m}{\frac{1}{w_j}F_{\mathcal{X}_{l_j}}^{-1}(1-\frac{qi}{m})},
\end{equation}
where $F_{\mathcal{X}_{l_j}}$ is a cumulative distribution function of $\mathcal{X}$ distribution with $l_j$ degrees of freedom. Then it holds that
\begin{equation}
\mathrm{gFDR}\leq \alpha\cdot\frac{m_0}{m}.
\end{equation}

Moreover, \citep{Groupslope} proved that the resulting procedure adapts to unknown sparsity and is asymptotically minimax for estimating the proportions of variance in the response variable explained by regressors from different groups.

\subsection{Stepdown Procedure}
The stepdown procedure \citep{kFWER2} aims to control $k$-FWER and FDP, i.e., given the thresholds $\alpha,\gamma\in (0,1)$, we aim to control
\begin{align}
&k\text{-}\mathrm{FWER}\leq\alpha
\label{kFWERcantrol}
\end{align}
and
\begin{align}
&\rm{Prob}\{\mathrm{FDP}>\gamma\}\leq\alpha.
\label{FDPcontrol}
\end{align}

Suppose that there are $m$ individual tests $H_{1},...,H_m$, whose corresponding p-values are $\hat{p}_1,...,\hat{p}_m$. Let $\hat{p}_{(1)}\leq \hat{p}_{(2)}\leq...\leq \hat{p}_{(m)}$ be the ordered p-values and let the non-negative and non-decreasing sequence $\alpha_1\leq\alpha_2...\leq \alpha_m$ be the $k$-FWER thresholds. The hypotheses corresponding to the sorted p-values are defined as $H_{(1)},H_{(2)}...,H_{(m)}$. Then the stepdown procedure can be defined step by step as follows:

\textbf{Step\,0}: Let $i=0$.

\textbf{Step\,1}: If $\hat{p}_{(i+1)}\geq \alpha_{i+1}$, go to \textbf{Step 2}. Otherwise, set $i=i+1$ and repeat \textbf{Step 1}.

\textbf{Step\,2}: Reject $H_{(j)}$ for $j\leq k$ and accept $H_{(j)}$ for $j>k$.

In other words, if $p_{(1)}>\alpha_1$, no null hypotheses are rejected. Otherwise, if $H_{(1)},H_{(2)}...,H_{(r)}$ are rejected, the largest $r$ satisfies
\begin{equation}
p_{(1)}\leq \alpha_1,p_{(2)}\leq\alpha_2,...,p_{(r)}\leq\alpha_r.\label{down}
\end{equation} 

Based on the stepdown procedure, Lehmann and Romano provided two different thresholds to ensure the $k$-FWER control and the FDP control, respectively \citep{kFWER2}.
\begin{lemma}\citep{kFWER2}\label{theorem2}
For testing $H_i,i= 1,...,m$, given $k$ and $\alpha\in(0,1)$, the stepdown procedure described in (\ref{down}) with 
\begin{equation}
\alpha_{i}= \begin{cases}\frac{k \alpha}{m}, & i \leq k \\ \frac{k \alpha}{m+k-i}, & i>k\end{cases}	\label{kFWER_threlod}
\end{equation}
controls the $k$-FWER, then (\ref{kFWERcantrol}) holds.

\end{lemma}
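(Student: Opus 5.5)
We follow Lehmann and Romano's argument. It uses only the validity of the null p-values, $\mathrm{Prob}\{\hat p_i\le u\}\le u$ for every true null $H_i$ and $u\in[0,1]$, and no assumption on the dependence among the p-values. Let $I_0\subseteq\{1,\dots,m\}$ be the set of true nulls and put $m_0=|I_0|$. If $m_0<k$, then $V\le m_0<k$ and there is nothing to prove, so assume $m_0\ge k$. Let $\hat q_{(1)}\le\cdots\le\hat q_{(m_0)}$ denote the ordered p-values of the true nulls only.

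\textbf{Step 1: reduce $\{V\ge k\}$ to an event about the $k$-th smallest null p-value.} On the event $V\ge k$, the stepdown procedure (\ref{down}) rejects at least $k$ true nulls. Consider the first time, in the order $H_{(1)},H_{(2)},\dots$, that the $k$-th true null is rejected, and say this happens at position $j$.

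Counting the positions $1,\dots,j$, exactly $k$ of them are true nulls. The remaining $j-k$ are false nulls, so $j-k\le m-m_0$, that is, $j\le m-m_0+k$.

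Because the procedure is stepdown, it rejects at position $j$ only if $\hat p_{(j)}\le\alpha_j$. The p-value at position $j$ is exactly $\hat q_{(k)}$. Since $\{\alpha_i\}$ is non-decreasing, we get
\[
\hat q_{(k)}\le\alpha_j\le\alpha_{m-m_0+k}.
\]
This index is meaningful: $m-m_0+k\le m$ because $m_0\ge k$.

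Now evaluate the threshold (\ref{kFWER_threlod}) at this index.
\begin{itemize}
\item If $m-m_0+k>k$, then $\alpha_{m-m_0+k}=\dfrac{k\alpha}{m+k-(m-m_0+k)}=\dfrac{k\alpha}{m_0}$.
\item Otherwise $m_0=m$, and $\alpha_{k}=\dfrac{k\alpha}{m}=\dfrac{k\alpha}{m_0}$ as well.
\end{itemize}
In both cases we conclude
\[
\{V\ge k\}\subseteq\Big\{\hat q_{(k)}\le \frac{k\alpha}{m_0}\Big\}.
\]

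\textbf{Step 2: bound this event by Markov's inequality.} Set $u=k\alpha/m_0$ and define the count
\[
N=\sum_{i\in I_0}\mathbf 1\{\hat p_i\le u\}.
\]
The event $\hat q_{(k)}\le u$ is the same as $N\ge k$. By validity of the null p-values, $\mathbb E N\le m_0u=k\alpha$. Markov's inequality then gives
\[
\mathrm{Prob}\{N\ge k\}\le \frac{\mathbb E N}{k}\le\alpha .
\]
Combined with Step 1, this yields (\ref{kFWERcantrol}). No independence is needed anywhere, because the bound on $\mathbb E N$ uses only linearity of expectation.

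\textbf{Main obstacle.} The only delicate point is the index bookkeeping in Step 1. One has to locate the rejection of the $k$-th true null at a position no later than $m-m_0+k$. One also has to check that the two-piece formula for $\alpha_i$ evaluates to exactly $k\alpha/m_0$ there, including the boundary case $m_0=m$ where the first branch of (\ref{kFWER_threlod}) applies. Once this identification is in place, Step 2 is routine.
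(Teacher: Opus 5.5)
Your proof is correct and is essentially the original Lehmann--Romano argument. You place the $k$-th true null at a position $j\le m-m_0+k$, deduce $\hat q_{(k)}\le\alpha_{m-m_0+k}=k\alpha/m_0$, and then apply the Markov (Bonferroni-type) bound; the paper does not prove this lemma itself but takes it from that source by citation.
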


\begin{lemma}\citep{kFWER2}\label{theorem3}
For testing $H_i,i= 1,...,m$, given $\alpha,\gamma\in (0,1)$, if the p-values of false null hypotheses are independent of the true ones, the stepdown procedure described in (\ref{down}) with 
\begin{equation}
\alpha_{i}=\frac{(\lfloor\gamma i\rfloor+1) \alpha}{m+\lfloor\gamma i\rfloor+1-i}\label{FDP_threlod}
\end{equation}
controls the FDP in the sense of (\ref{FDPcontrol}).
\end{lemma}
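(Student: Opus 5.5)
We follow the argument of Lehmann and Romano for the stepdown FDP procedure. Write $I$ for the index set of the $m_0$ true nulls, with $\hat{p}_i$, $i\in I$, marginally $U(0,1)$ (or stochastically larger). The structure of the proof is to reduce the event $\{\mathrm{FDP}>\gamma\}$ to an event that involves only the true null p-values, and then bound its probability by a union-type inequality over the nulls.

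\emph{Step 1: reduction to a single crossing event.} Suppose $\mathrm{FDP}>\gamma$, and let $j$ be the smallest index such that, among $H_{(1)},\dots,H_{(j)}$, the number of true nulls exceeds $\gamma j$. Minimality of $j$ gives the following facts.
\begin{itemize}
\item $H_{(j)}$ is a true null.
\item Exactly $\lfloor\gamma j\rfloor+1$ true nulls lie among the first $j$ ordered hypotheses.
\item Since the procedure is stepdown, every one of these $j$ hypotheses is rejected. In particular $\hat{p}_{(j)}\le\alpha_j$.
\item At most $m-m_0$ false nulls precede $H_{(j)}$, so $j\le m-m_0+\lfloor\gamma j\rfloor+1$.
\end{itemize}
Set $s=\lfloor\gamma j\rfloor+1$. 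Then the $s$-th smallest null p-value $q_{(s)}$ satisfies
\[
q_{(s)}\le \alpha_j=\frac{s\alpha}{m+s-j}\le \frac{s\alpha}{m_0},
\]
where the last inequality uses $m+s-j\ge m_0$.

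\emph{Step 2: remove the dependence on the random index $j$.} The index $j$ depends on the false null p-values. This is where the independence hypothesis enters: we condition on the false null p-values, which fixes the positions at which the nulls can cross. We then need to bound
\[
\mathrm{Prob}\Bigl\{\exists s:\ q_{(s)}\le \tfrac{s\alpha}{m_0}\text{ at an admissible position}\Bigr\}.
\]
For any such fixed configuration, the Hommel/Simes-type inequality for $m_0$ independent uniforms gives
\[
\mathrm{Prob}\Bigl\{\exists s:\ q_{(s)}\le \tfrac{s\alpha}{m_0}\Bigr\}\le\alpha.
\]
Integrating over the false null p-values then yields the bound (\ref{FDPcontrol}).

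\emph{Expected obstacle.} The hard part is Step 2. The admissible values of $s$ form a subset of $\{1,\dots,m_0\}$, and in the crude estimate $\alpha_j\le s\alpha/m_0$ the thresholds need not be monotone in $s$. The first thing I would try is to check that restricting to a subset of crossing indices can only shrink the event, so the Simes bound still applies. If that fails, the fallback is Lehmann and Romano's more refined counting argument, which loses a factor depending on $\sum_{s} 1/s$ and gives the conservative version. Since the statement quotes their result, in the paper it suffices to cite \citep{kFWER2}, checking that the hypotheses match, namely independence between the false and true null p-values.
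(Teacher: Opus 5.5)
The paper gives no proof of this lemma; it just quotes Lehmann--Romano, so your closing suggestion to cite is consistent with the paper. The argument you sketch, however, does not prove the statement under its stated hypothesis.

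\textbf{Where Step 2 fails.} Your Step 1 is correct and purely deterministic: $\{\mathrm{FDP}>\gamma\}\subseteq\bigcup_{s\le m_0}\{q_{(s)}\le s\alpha/m_0\}$. In Step 2 you apply Simes' inequality ``for $m_0$ independent uniforms'', which assumes the true-null p-values are mutually independent. The lemma assumes only that the false-null p-values are independent of the true-null ones; the true nulls may be arbitrarily dependent among themselves.

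Your conditioning on the false-null p-values does no work, because the Step 1 event does not involve them. So the stated hypothesis is never used, and an unstated one is. Without that extra assumption Step 2 is false. For $m_0=2$ there is a joint law with uniform marginals under which $\{q_1\le\alpha/2,\ q_2>\alpha\}$, $\{q_2\le\alpha/2,\ q_1>\alpha\}$ and $\{\alpha/2<q_1,q_2\le\alpha\}$ are disjoint, each with probability $\alpha/2$. The Simes union then has probability $3\alpha/2$, and in general up to about $\alpha\sum_{s\le m_0}1/s$.

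So Step 1 throws away exactly the information you need. The obstacle you flag, monotonicity over admissible $s$, is not the real one: under Simes a sub-union is trivially inside the full union. Your $\sum 1/s$ fallback proves control for a different, rescaled procedure, not this one.

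\textbf{The missing idea.} Use the conditioning to remove the union over $s$ altogether.
\begin{itemize}
\item Fix the false-null p-values. Let $F(t)$ be the now deterministic number of them that are $\le t$, and $Q(t)$ the number of true-null p-values $\le t$.
\item Since the $\alpha_i$ are nondecreasing, the stepdown rejection set is exactly $\{i:\hat p_i\le\alpha_R\}$. Hence $V=R-F(\alpha_R)$ is a deterministic function of $R$.
\item Therefore $\{\mathrm{FDP}>\gamma\}=\{R\in\mathcal{R}\}$ for the fixed set $\mathcal{R}=\{r\ge 1:\ r-F(\alpha_r)>\gamma r\}$.
\item Let $r^*=\min\mathcal{R}$ and $v^*=r^*-F(\alpha_{r^*})$. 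Then $R\ge r^*$ forces $\hat p_{(r^*)}\le\alpha_{r^*}$, that is, $Q(\alpha_{r^*})\ge v^*$.
\item By Markov's inequality the conditional probability is at most $m_0\alpha_{r^*}/v^*$. This is where independence enters: it keeps the true-null p-values super-uniform after conditioning.
\item Finally, take $k=\lfloor\gamma r^*\rfloor+1\le v^*$ and note $r^*-v^*\le m-m_0$. The same counting as in your Step 1 gives $\alpha_{r^*}\le k\alpha/(m_0-v^*+k)\le v^*\alpha/m_0$, because $(v^*-k)(m_0-v^*)\ge 0$. If $v^*>m_0$ the event is empty.
\end{itemize}
This gives $\mathrm{Prob}(\mathrm{FDP}>\gamma)\le\alpha$ with no assumption on the dependence among the true nulls.

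If you prefer to keep your route, you must add the hypothesis that the true-null p-values satisfy Simes' inequality. That is a different version of the Lehmann--Romano result from the one stated here.
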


Lemmas \ref{theorem2} and \ref{theorem3} demonstrate that the stepdown procedure enjoys theoretical guarantees for $k$-FWER control and FDP control under ingenious selections of $\alpha_i$. Indeed, these theoretical properties of the stepdown procedure motivate our designs for new SLOPE and g-SLOPE algorithms. 

\section{Methodology}\label{sec4}
This section integrates the stepdown procedure \citep{kFWER2} into the classical SLOPE \citep{slope} and g-SLOPE \citep{Groupslope} to formulate our new stepdown-based approaches for controlled feature selection, ensuring both $k$-FWER control and FDP control. First, we provide the tuning parameter sequences for the $k$-FWER and FDP controls under the orthogonal design. Furthermore, we present an intuitive theoretical analysis for selecting regularization parameters under general settings.

\subsection{Orthogonal Design}
It has been illustrated in \citep{slope} that there is a link between multiple tests and model selection for SLOPE under orthogonal design. Following this line, we assume that $X$ is an $n\times m$ dimensional orthogonal matrix, i.e, $X^{T}X=I_m$ and $\epsilon \sim N(0,\sigma^2I_n)$ is an $n$-dimensional column vector with known variance. Then, the linear regression model $y=X\beta+\epsilon$
is further transformed into
\begin{equation}
\tilde{y}=X^{T}y=\beta+X^{T}\epsilon\sim N(\beta,\sigma^2I_p).
\end{equation}

It is well known that selecting effective features can be reduced to a multiple-hypothesis testing problem. Denote $m$ hypotheses as $H_i:\beta_i=0,1\leq i\leq m$. If $H_i$ is rejected, $\beta_i$ is considered as an informative feature and vice versa. Bogdan et al. provided the selection mechanism for regularization parameters of SLOPE \citep{slope} via the BH procedure \citep{benjamini-1995bh} under an  For brevity, we refer to the proposed methods as $k$-SLOPE and F-SLOPE, respectively, with respect to FWER and FDP. 

The regularization scheme of $k$-SLOPE is formulated as
\begin{equation}
\mathop{\arg \min}_{\beta\in \mathbb{R}^m}\frac{1}{2}\|y-X \beta\|_{l_2}^{2}+ \sigma\cdot\sum_{i=1}^m\lambda_{k\text{-}\mathrm{FWER}}(i)|\beta|_{(i)},\label{kFWER_model}
\end{equation} 
where
\begin{equation} \label{sequence1}
\lambda_{k\text{-}\mathrm{FWER}}(i)= \begin{cases}\Phi^{-1}(1-k \alpha/{2m}), & i \leq k \\ \Phi^{-1}(1-k\alpha/{2(m+k-i)}), & i>k.\end{cases}
\end{equation}

To characterize the theoretical property of the proposed $k$-SLOPE, we introduce the following lemmas stated in \citep{slope,topk}.

\begin{lemma}\citep{slope}
Let $H_i$ be a null hypothesis and let $r\geq 1$. Then we can derive that
\begin{equation}\label{lemma1}
\begin{aligned}
&\left\{y: H_{i} \text { is rejected and } R =r\right\}
=\left\{y:\left|y_{i}\right|>\lambda_{r} \text { and } R=r\right\}.
\end{aligned}
\end{equation}
\end{lemma}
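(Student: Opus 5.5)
This is the lemma from \citep{slope} for SLOPE under orthogonal design: with $X^TX=I_m$, the estimator is $\hat\beta=\operatorname{prox}_{J_\lambda}(\tilde y)$ with $\tilde y=X^Ty$. Throughout, ``$y_i$'' means the $i$-th coordinate of $\tilde y$, and $R$ is the number of nonzero entries of $\hat\beta$. The plan is to characterize the support of the sorted-$\ell_1$ proximal map and show it coincides with a stepdown-type rule on the order statistics $|\tilde y|_{(1)}\ge\cdots\ge|\tilde y|_{(m)}$. Once that is in hand, the set identity follows by comparing $|y_i|$ with $\lambda_r$.

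\textbf{Step 1.} We first recall the structural facts about $\operatorname{prox}_{J_\lambda}$ from \citep{slope}:
\begin{itemize}
\item the prox preserves signs and the ordering of magnitudes, so the selected set is always $\{(1),\dots,(R)\}$ for some $R$;
\item $R=r$ holds exactly when the partial-sum conditions hold, namely
\begin{equation*}
\sum_{j=i}^{r}\bigl(|y|_{(j)}-\lambda_j\bigr)>0 \text{ for all } i\le r, \qquad \sum_{j=r+1}^{i}\bigl(|y|_{(j)}-\lambda_j\bigr)\le 0 \text{ for all } i>r.
\end{equation*}
\end{itemize}
These come from the KKT conditions for the sorted $\ell_1$ norm, whose subdifferential is described by majorization of partial sums. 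The first consequence we draw is the inclusion $\subseteq$. If $H_i$ is rejected and $R=r$, then $|y_i|\ge |y|_{(r)}$. Taking the partial-sum condition with $i=r$ gives $|y|_{(r)}>\lambda_r$, hence $|y_i|>\lambda_r$.

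\textbf{Step 2.} For the inclusion $\supseteq$, suppose $|y_i|>\lambda_r$ and $R=r$, and argue by contradiction that $i$ is among the top $r$. If not, then $|y_i|\le |y|_{(r+1)}$, so $|y|_{(r+1)}>\lambda_r\ge\lambda_{r+1}$. This contradicts the second partial-sum condition with $i=r+1$, which says $|y|_{(r+1)}-\lambda_{r+1}\le 0$. Hence $i$ is ranked within the first $r$ and $\hat\beta_i\ne 0$, i.e.\ $H_i$ is rejected.

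Ties among the $|y_j|$ occur with probability zero under the Gaussian noise, so ranks are well defined almost surely.

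\textbf{Main obstacle.} The delicate part is Step 1, deriving the exact characterization of $\{R=r\}$ from the KKT conditions. We would take it from the explicit form of the prox given in \citep{slope}, where it is computed by pool-adjacent-violators on $|y|_{(j)}-\lambda_j$ followed by positive-part truncation. From that form, the last nonzero block ends at $r$ precisely when the stated partial-sum inequalities hold. Only the two single-index consequences, $|y|_{(r)}>\lambda_r$ and $|y|_{(r+1)}\le\lambda_{r+1}$, are actually needed, and both are direct from the block structure.
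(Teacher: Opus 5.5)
Your proof is correct. The paper gives no argument of its own and imports this lemma from \citep{slope}; your route (the sorted-$\ell_1$ prox preserves the ordering of magnitudes, and the partial-sum characterization of $\{R=r\}$ yields $|y|_{(r)}>\lambda_r$ and $|y|_{(r+1)}\le\lambda_{r+1}$) is essentially the argument used in that source.
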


\begin{remark}
Lemma 1 demonstrates that when the number of selected features $R$ is equal to $r$ ($r\geq1$), the condition for $H_i$ to be rejected is that its corresponding $|y_i|$ is greater than $\lambda_r$.
\end{remark}

\begin{lemma}\citep{slope}\label{lemma2}
Consider applying the SLOPE procedure to $\tilde{y}=(y_1,...,y_{i-1},y_{i+1},...y_p)$ with weight $\tilde{\lambda}=(\lambda_2,...,\lambda_p)$ and let $\tilde{R}$ be the number of rejections among this procedure. Then with $r>1$, we have
\begin{equation}
\{y:|y_i|>\lambda_r\text{ and } R=r\}\subset\{y:|y_i|>\lambda_r\text{ and }\tilde{R}=r-1\}.
\end{equation}
\end{lemma}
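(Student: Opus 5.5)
The plan is to reduce everything to the explicit characterization of the SLOPE support under orthogonal design from \citep{slope}. Order the observations as $|y|_{(1)}\geq\cdots\geq|y|_{(p)}$; ties occur with probability zero and will be ignored. The number of rejections $R$ of SLOPE with weights $\lambda$ equals $r$ if and only if the following two families of partial-sum conditions hold:
\begin{equation*}
\sum_{j=k}^{r}\bigl(|y|_{(j)}-\lambda_j\bigr)>0 \quad (1\leq k\leq r), \qquad \sum_{j=r+1}^{k}\bigl(|y|_{(j)}-\lambda_j\bigr)\leq 0 \quad (k>r).
\end{equation*}
The rejected hypotheses are then exactly those with the $r$ largest $|y_j|$. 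This is the standard description of the support of the proximal operator of $J_\lambda$, obtained by pooling adjacent violators. I would quote it from \citep{slope}, or rederive it briefly from the KKT conditions of the prox. I will apply the same characterization to $\tilde y$ with weights $\tilde\lambda_j=\lambda_{j+1}$, and show that the conditions for $\tilde R=r-1$ follow from those for $R=r$.

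First I need that $y_i$ lies among the top $r$. By the previous lemma, $\{|y_i|>\lambda_r,\,R=r\}$ coincides with $\{H_i \text{ rejected},\,R=r\}$, and rejected means $|y_i|=|y|_{(s)}$ for some $s\leq r$. The sorted $\tilde y$ then satisfies $|\tilde y|_{(j)}=|y|_{(j)}$ for $j<s$ and $|\tilde y|_{(j)}=|y|_{(j+1)}$ for $j\geq s$.

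Next I verify the upper conditions, i.e. for $k>r-1$ in the $\tilde y$ problem. Since $s\leq r$, all indices involved are at least $s$, and the shift gives the identity
\begin{equation*}
\sum_{j=r}^{k}\bigl(|\tilde y|_{(j)}-\tilde\lambda_j\bigr)=\sum_{j=r+1}^{k+1}\bigl(|y|_{(j)}-\lambda_j\bigr)\leq 0 .
\end{equation*}

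For the lower conditions, with $1\leq k\leq r-1$, the sum $\sum_{j=k}^{r-1}\bigl(|\tilde y|_{(j)}-\lambda_{j+1}\bigr)$ splits into two cases.
\begin{itemize}
\item If $k\geq s$, the sum equals $\sum_{j=k+1}^{r}\bigl(|y|_{(j)}-\lambda_j\bigr)>0$ directly.
\item If $k<s$, the $r-k$ values $|\tilde y|_{(k)},\dots,|\tilde y|_{(r-1)}$ are the values $|y|_{(k)},\dots,|y|_{(r)}$ with $|y|_{(s)}$ removed. Termwise they dominate $|y|_{(k+1)},\dots,|y|_{(r)}$, because $|y|_{(j)}\geq|y|_{(j+1)}$ for $j<s$ and the entries are equal for $j\geq s$. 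Hence the sum is at least $\sum_{j=k+1}^{r}\bigl(|y|_{(j)}-\lambda_j\bigr)$, which is positive by the $R=r$ condition with index $k+1\leq r$.
\end{itemize}
Both families of conditions therefore hold with $r-1$, so $\tilde R=r-1$, and the condition $|y_i|>\lambda_r$ is carried along unchanged, which gives the inclusion.

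The main obstacle is making the support characterization precise: stating it with the right strict versus non-strict inequalities and checking that it indeed pins down $R$ uniquely. The case $k<s$ is the one delicate computation. A naive expansion leaves a stray $\lambda_k-|y_i|$ term of indefinite sign, and the fix is the termwise domination argument above.
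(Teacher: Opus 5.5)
The paper gives no proof of this lemma (it is imported from \citep{slope}), so there is nothing internal to compare against. Your argument is correct and takes the natural route: the partial-sum characterization of the number $R$ of nonzeros of the orthogonal-design SLOPE solution.

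You state that characterization with the right strict and non-strict inequalities. Both directions follow from the antitonic (non-increasing) fit of $(|y|_{(j)}-\lambda_j)_j$. Its $r$-th value is positive and its $(r+1)$-th nonpositive, which forces a block boundary between $r$ and $r+1$. Suffix means within a block are at least the block mean and prefix means at most it, and this gives exactly your two families.

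Your verification for $\tilde y$ with $\tilde\lambda_j=\lambda_{j+1}$ is sound, including the reindexing in the upper family and the termwise domination in the lower one.

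One simplification: the case split $k\ge s$ versus $k<s$ is unnecessary. Deleting any single entry from a list gives $|\tilde y|_{(j)}\ge|y|_{(j+1)}$ for every $j$, which yields all lower conditions at once. The fact $s\le r$ (equivalently $|y|_{(r+1)}\le\lambda_{r+1}\le\lambda_r<|y_i|$) is needed only to get the equality $|\tilde y|_{(j)}=|y|_{(j+1)}$ for $j\ge r$ in the upper conditions.

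Your argument also covers $r=1$ (giving $\tilde R=0$), as in the original statement, and never uses that $H_i$ is null.
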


\begin{remark}
Lemma 2 states that if $y_i$ is removed at random, the number of selected features is greater than that in the previous set.
\end{remark}

\begin{lemma}\citep{topk}\label{lemma3}
Let $s_{[k]}$ be the top-k element of a set $S=\{s_1,...,s_n\}$, such as $s_{[1]}\geq s_{[2]}\geq...\geq s_{[n]}$. $\sum_{i=1}^k s_{[i]}$ is a convex function of $(s_1,...,s_n)$. Furthermore, for $s_i\geq0$ and $i=1,...,n$, we have $\sum_{i=1}^k s_{[i]}=\min_{\lambda\geq0}\{k\lambda+\sum_{i=1}^n [s_i-\lambda]_+\}$, where $[a]_+=\max\{0,a\}$ is the hinge function.
\end{lemma}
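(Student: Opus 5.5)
Lemma \ref{lemma3} asserts two things: the top-$k$ sum $f_k(s)=\sum_{i=1}^k s_{[i]}$ is convex in $(s_1,\dots,s_n)$, and for $s\ge 0$ it equals the value of the one-dimensional minimization $\min_{\lambda\ge0}\{k\lambda+\sum_{i}[s_i-\lambda]_+\}$. I would prove the variational formula first and read convexity off it as a corollary. I would also give a direct argument, since convexity holds for all real $s$.

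\textbf{Convexity via a maximum of linear functions.} Write
\[
f_k(s)=\max\Bigl\{\textstyle\sum_{i\in A}s_i : A\subset\{1,\dots,n\},\ |A|=k\Bigr\}.
\]
This holds because, for any $k$-subset $A$, the entries indexed by $A$ are dominated termwise by the $k$ largest entries, and the maximum is attained by the indices of those $k$ largest entries. Each map $s\mapsto\sum_{i\in A}s_i$ is linear, and a pointwise maximum of finitely many linear functions is convex. This settles the first claim without any sign condition.

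\textbf{Variational formula: upper bound.} Let $g(\lambda)=k\lambda+\sum_{i=1}^n[s_i-\lambda]_+$. For any $\lambda\ge0$ and any $k$-subset $A$,
\[
\textstyle\sum_{i\in A}s_i=\sum_{i\in A}\bigl(\lambda+(s_i-\lambda)\bigr)\le k\lambda+\sum_{i\in A}[s_i-\lambda]_+\le g(\lambda),
\]
where the last step uses that the hinge terms are nonnegative. Maximizing over $A$ gives $f_k(s)\le g(\lambda)$ for every $\lambda\ge0$, so $f_k(s)\le\inf_{\lambda\ge0}g(\lambda)$.

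\textbf{Variational formula: attainment.} Take $\lambda^*=s_{[k]}$, which is $\ge0$ because $s\ge0$; this is exactly where the nonnegativity hypothesis enters. Then $[s_i-\lambda^*]_+$ vanishes except on entries exceeding $s_{[k]}$. Summing over the top $k$ entries, with ties allowed,
\[
g(\lambda^*)=k s_{[k]}+\textstyle\sum_{i=1}^k(s_{[i]}-s_{[k]})=f_k(s).
\]
So the infimum is attained and equals $f_k(s)$. Since $g$ is convex and piecewise linear in $\lambda$, writing $\min$ is justified.

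\textbf{Where care is needed.} The only delicate point is this attainment step. One must check that ties at the value $s_{[k]}$ contribute zero, which they do since $[0]_+=0$. One must also handle the constraint $\lambda\ge0$: without $s\ge0$, $s_{[k]}$ could be negative, and the minimizer over $\lambda\ge0$ would sit at $0$, giving a different value. Convexity can alternatively be recovered from the formula, since for fixed $\lambda$ the map $(s,\lambda)\mapsto g$ is jointly convex and partial minimization preserves convexity. That route only covers $s\ge0$, however, so I would keep the max-of-linear argument as the main proof of convexity.
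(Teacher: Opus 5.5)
Your proof is correct. The paper itself gives no proof of Lemma~\ref{lemma3}; it imports the statement verbatim from \citep{topk}, so the relevant comparison is with the standard derivation in that literature.

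That derivation uses linear programming. It writes $f_k(s)=\max\{p^\top s:\ 0\le p_i\le 1,\ \sum_i p_i=k\}$ and passes to the LP dual. Minimizing out the multipliers of the constraints $p_i\le 1$ leaves $\min_{\lambda\in\mathbb{R}}\{k\lambda+\sum_i[s_i-\lambda]_+\}$, and strong duality closes the gap. The hypothesis $s\ge0$ is used only to show that the optimal $\lambda$ can be taken nonnegative.

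Your argument is this duality made explicit. The termwise bound $s_i\le\lambda+[s_i-\lambda]_+$ is weak duality. Choosing $\lambda^*=s_{[k]}$ exhibits a dual optimum directly, with tied entries contributing $[0]_+=0$. You thus replace the appeal to strong duality with a short elementary computation, and you see exactly where nonnegativity enters: it keeps $\lambda^*$ feasible. The LP view, in exchange, explains where the hinge formula comes from. It also gives convexity at once, since $f_k$ is the support function of the capped simplex, whose vertices are the indicators of $k$-subsets; this is your max-of-linear-forms argument.

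One refinement to your closing remark. Neither your upper bound nor your attainment step needs $\lambda\ge0$, except to keep $\lambda^*$ feasible. So the same two steps prove $f_k(s)=\min_{\lambda\in\mathbb{R}}\{k\lambda+\sum_i[s_i-\lambda]_+\}$ for every $s\in\mathbb{R}^n$. Partial minimization over $\lambda\in\mathbb{R}$ of the jointly convex map $(s,\lambda)\mapsto k\lambda+\sum_i[s_i-\lambda]_+$ therefore gives convexity on all of $\mathbb{R}^n$. The restriction to $s\ge0$ comes from the constraint $\lambda\ge0$, not from the variational route itself.
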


\begin{remark}
Lemma 3 gives the equivalent form of the sum of the first $k$ non-negative values. 
\end{remark}

The $k$-SLOPE equipped with \eqref{sequence1} yields the following theoretical property.
\begin{theorem} \label{our_theorem1}
In the linear model (\ref{linear}) with the orthogonal matrix $X$ and noise $\epsilon\sim N(0,\sigma^2I_n)$, given $k$ and $\alpha\in(0,1)$, the $k$-FWER of the $k$-SLOPE model (\ref{kFWER_model}) satisfies (\ref{kFWERcantrol}).
\end{theorem}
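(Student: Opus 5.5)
The plan is to follow the multiple-testing proof of Lemma~\ref{theorem1} in \citep{slope}, but with the stepdown constants of Lemma~\ref{theorem2} in place of the BH constants.

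\textbf{Reduction to a Gaussian sequence model.} Under $X^TX=I_m$ we pass to $\tilde y=X^Ty\sim N(\beta,\sigma^2 I_m)$, and after rescaling by $\sigma$ we may take $\sigma=1$. The $k$-SLOPE problem (\ref{kFWER_model}) then becomes the SLOPE proximal problem applied to $\tilde y$ with weights $\lambda_{k\text{-}\mathrm{FWER}}$. For a null index $i$ (so $\beta_i=0$) the statistic $\tilde y_i$ is $N(0,1)$ and independent of $(\tilde y_j)_{j\neq i}$. By construction of (\ref{sequence1}),
\[
\mathrm{Prob}\{|\tilde y_i|>\lambda_{k\text{-}\mathrm{FWER}}(r)\}=\alpha_r ,
\]
where $\alpha_r$ are exactly the Lehmann--Romano thresholds in (\ref{kFWER_threlod}). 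Let $H_0$ be the null set, $m_0=|H_0|$, and $m_1=m-m_0$.

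\textbf{Decomposition and Markov step.} Decompose the bad event according to the number of rejections, $\{V\ge k\}=\bigcup_{r\ge k}\{V\ge k,\ R=r\}$. By Lemma~1 (eq.~(\ref{lemma1})), on $\{R=r\}$ a null $i$ is rejected iff $|\tilde y_i|>\lambda_r$. Set $N_r=\sum_{i\in H_0}\mathbf 1\{|\tilde y_i|>\lambda_r\}$. Then
\[
\{V\ge k,\ R=r\}=\{N_r\ge k,\ R=r\},
\]
and Markov's inequality gives
\[
\mathrm{Prob}\{V\ge k\}\le \frac1k\sum_{r}\mathbb E\big[N_r\mathbf 1\{R=r\}\big]=\frac1k\sum_{i\in H_0}\sum_r \mathrm{Prob}\{|\tilde y_i|>\lambda_r,\ R=r\}.
\]
Next apply Lemma~\ref{lemma2}: the event in the last sum is contained in $\{|\tilde y_i|>\lambda_r,\ \tilde R_i=r-1\}$. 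Here $\tilde R_i$ is the number of rejections of SLOPE run on $\tilde y$ with coordinate $i$ deleted and weights $(\lambda_2,\dots,\lambda_m)$. Since $\tilde R_i$ is independent of $\tilde y_i$, this factorizes, and
\[
\mathrm{Prob}\{V\ge k\}\le \frac1k\sum_{i\in H_0}\sum_{r}\alpha_r\,\mathrm{Prob}\{\tilde R_i=r-1\}.
\]

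\textbf{Main obstacle: bounding $\alpha_r$ on the relevant range of $r$.} Since $\alpha_r=k\alpha/(m+k-r)$ for $r>k$, we need $\alpha_r\le k\alpha/m_0$ uniformly on the relevant $r$. This would give $\mathrm{Prob}\{V\ge k\}\le \frac1k\, m_0\cdot k\alpha/m_0=\alpha$. The inequality $\alpha_r\le k\alpha/m_0$ holds exactly when $r\le m_1+k$, mirroring the key step of Lehmann--Romano, where the $k$-th false rejection happens no later than step $m_1+k$. The difficulty is that the crude Markov bound above does not restrict $r$; for large $R$ the weights $\lambda_r$ shrink.

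To handle this, I would refine the argument so that only the first $k$ null rejections are charged. Let $r^\ast$ be the rank position in the SLOPE ordering at which the $k$-th null is entered. I would show, via the known sandwich property that SLOPE's rejection set contains the stepdown set and is ordered by $|\tilde y|$, that $r^\ast\le m_1+k$, and that at that position $|\tilde y|$ exceeds a weight at least $\lambda_{m_1+k}$. This step uses the sorted-$\ell_1$ structure, for which Lemma~\ref{lemma3} (the top-$k$ representation) gives the needed convexity and comparison. Then $\{V\ge k\}$ implies that at least $k$ nulls exceed $\lambda_{m_1+k}=\Phi^{-1}(1-\alpha_{m_1+k}/2)$, with $\alpha_{m_1+k}\le k\alpha/m_0$. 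A final Markov bound,
\[
\mathrm{Prob}\{N_{m_1+k}\ge k\}\le m_0\,\alpha_{m_1+k}/k\le\alpha,
\]
then closes the proof. The ordering claim $r^\ast\le m_1+k$ together with the threshold comparison is where I expect the real work to lie.
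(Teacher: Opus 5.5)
Your reduction to $\tilde y\sim N(\beta,I_m)$ and the Markov/Lemma~\ref{lemma2} computation are correct. You also correctly diagnose that this alone gives only $\mathrm{Prob}\{V\ge k\}\le m_0\alpha/k$.

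\textbf{The gap is in the refinement.} The inclusion $\{V\ge k\}\subseteq\{N_{m_1+k}\ge k\}$ is false for SLOPE. The ordering claim $r^\ast\le m_1+k$ is trivial, but the threshold claim is exactly what fails. The SLOPE prox pools the differences $|\tilde y|_{(i)}-\lambda_i$ into blocks, so it can select a position $j$ with $|\tilde y|_{(j)}\le\lambda_j$. The only per-coordinate guarantee from (\ref{lemma1}) is $|\tilde y_i|>\lambda_R$, and $\lambda_R$ lies below $\lambda_{m_1+k}$ precisely when $R>m_1+k$, which is the case you need to handle.

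Here is a concrete counterexample. Take $m=2$, $m_1=0$, $k=1$, $\alpha=0.1$, so $\lambda_1=\Phi^{-1}(0.975)\approx 1.960$ and $\lambda_2=\Phi^{-1}(0.95)\approx 1.645$. For $\tilde y=(1.9,1.9)$ the unique minimizer is $\hat\beta_1=\hat\beta_2=1.9-(\lambda_1+\lambda_2)/2\approx 0.098>0$. Hence $V=2\ge k$, while $N_{m_1+k}=N_1=0$.

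The sandwich property points the wrong way for your purpose. In this example the stepdown set is empty, yet SLOPE rejects both nulls. Containing the stepdown set only says SLOPE rejects at least as much as stepdown, so the Lehmann--Romano step ``the $k$-th false rejection occurs by step $m_1+k$ with statistic above $\lambda_{m_1+k}$'' does not transfer. The half you could use is that SLOPE's set is contained in the step-up set $\{1,\dots,i_{SU}\}$, where $i_{SU}=\max\{i:|\tilde y|_{(i)}>\lambda_i\}$. But then you would need a $k$-FWER bound for the step-up procedure with the constants (\ref{kFWER_threlod}) under independence of the null coordinates. That is a Simes/Hochberg-type inequality, not a consequence of Lemma~\ref{theorem2}. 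Lemma~\ref{lemma3} is an identity for top-$k$ sums of fixed numbers and says nothing about which coordinates the prox selects, so it cannot supply this comparison.

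\textbf{How the paper proceeds.} The paper never restricts $r$. After the same split over $\{R=r\}$ and (\ref{lemma1}), it does three things:
\begin{itemize}
\item it asserts $\mathrm{Prob}(V\ge k,R=r)\le \mathrm{Prob}(|\hat q|_{(k)}\ge\lambda_r)\,\mathrm{Prob}(\tilde R=r-1)$ via Lemma~\ref{lemma2};
\item it bounds the first factor by $\alpha_r$ through the top-$k$ representation;
\item it finishes with $\alpha_r=k\alpha/(m+k-r)\le\alpha$ for every $r\le m$.
\end{itemize}
Do not treat this as a way around your obstacle. Lemma~\ref{lemma2} decouples one fixed null coordinate $\tilde y_i$ from $\tilde R_i$, not the order statistic $|\hat q|_{(k)}$ of all nulls. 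Moreover, $\mathrm{Prob}(|\hat q|_{(k)}\ge\lambda_r)=\mathrm{Prob}(N_r\ge k)$ with $N_r\sim\mathrm{Bin}(m_0,\alpha_r)$ is in general far larger than $\alpha_r$. For $k=1$ it equals $1-(1-\alpha_r)^{m_0}\approx m_0\alpha_r$, which is the same multiplicity loss your Markov computation exposed.
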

\begin{proof}
Suppose that $X$ is the orthogonal design matrix and $\sigma^2 = 1$. Now we have $\tilde{y}=X^{T}y\sim N(\beta,I_m)$ and $m$ null hypotheses $H_i:\beta_i=0, i=1,...,m$, in which the first $m_0$ hypotheses are true null hypotheses. Then order the absolute value of $y$ corresponding to the first $m_0$ true null hypotheses, denoted as 
\begin{equation}
|\hat{q}|_{(1)}\geq|\hat{q}|_{(2)}\geq...\geq|\hat{q}|_{(m_0)}.
\end{equation}

Let $k$ be the largest number of false selected features within the scope of tolerability or the number of indices that satisfy $\beta_i\neq 0$ within $\{1,...,m_0\}$. Assume $m_0\geq k$ or there is nothing left to prove. Then we can derive,
\begin{equation}
\begin{aligned}
k\text{-}\mathrm{FWER}=\mathrm{Prob}(V\geq k)
=\sum_{r=1}^m \mathrm{Prob}(V\geq k \text{ and } R=r).\label{eq1}
\end{aligned}
\end{equation}

Through the stepdown procedure \citep{kFWER2} and Lemma \ref{lemma1} \citep{slope}, $k$-SLOPE commits at least $k$ false rejections if and only if 
\begin{equation}
|\hat{q}|_{(1)}\geq \lambda_r,|\hat{q}|_{(2)}\geq\lambda_r,...,|\hat{q}|_{(k)}\geq\lambda_r,
\end{equation}
when the number of selected features $R = r$. Then we have
\begin{equation}
\begin{aligned}
\mathrm{Prob}(V\geq k \text{ and } R=r)
=&\mathrm{Prob}(|\hat{q}|_{(1)}\geq\lambda_r,...,|\hat{q}|_{(k)}\geq\lambda_r \text{ and } R=r)\\
\leq &\mathrm{Prob}(|\hat{q}|_{(k)}\geq\lambda_r \text{ and } R=r).
\end{aligned}
\end{equation}

Due to Lemma 2 \citep{slope} and the independence between $|\hat{q}|_{(k)}$ and $\tilde{y}$, we have
\begin{align}
&\mathrm{Prob}(V\geq k \text{ and } R=r)
\leq \mathrm{Prob}(|\hat{q}|_{(k)}\geq\lambda_r)\mathrm{Prob}(\tilde{R}=r-1)\label{eq2}
\end{align}

It is not difficult to find $\mathrm{Prob}(|\hat{q}|_{(i)}\geq\lambda_r)$ non-increasing, thus
\begin{align}
\mathrm{Prob}(|\hat{q}|_{(k)}\geq\lambda_r) \leq \frac{1}{k}\sum_{i=1}^k\mathrm{Prob}(|\hat{q}|_{(i)}\geq\lambda_r)\label{eq3}
\end{align}

Next, combined with Lemma \ref{lemma3} \citep{topk}, we can derive the following
\begin{equation}
\begin{aligned}
\frac{1}{k}\sum_{i=1}^k\mathrm{Prob}(|\hat{q}|_{(i)}\geq\lambda_r)
=&\frac{1}{k}\min_{t\geq0}\{kt+\sum_{i=1}^{m_0}[\mathrm{Prob}(|\hat{q}_i|\geq\lambda_r)-t]_+\}\\
\leq&\min_{t_0\leq t\leq \alpha}\{t+\frac{m_0}{k}[t_0-t]_+\},
\end{aligned}
\end{equation}
where $t_0=\mathrm{Prob}(|\hat{q}_i|\geq\lambda_r)=k\alpha/(p+k-r)$. 

Plugging above inequalities into (\ref{eq1}) gives
\begin{equation}
\begin{aligned}
k\text{-}\mathrm{FWER} 
= &\sum_{r=1}^m\mathrm{Prob}(|\hat{q}|_{(1)}\geq\lambda_r,...,|\hat{q}|_{(k)}\geq\lambda\text{ and } R=r)\\
\leq&\sum_{r=1}^m\min_{t_0\leq t\leq \alpha}\{t+\frac{p_0}{k}[t_0-t]_+\}\mathrm{Prob}(\tilde{R}=r-1)\\
\leq &\sum_{r=1}^m \frac{k\alpha}{p+k-r} \mathrm{Prob}(\tilde{R}=r-1)\leq \alpha.
\end{aligned}
\end{equation}

This completes the proof. 
\end{proof}

\begin{remark}
Theorem \ref{our_theorem1} illustrates that $k$-SLOPE controls the $k$-FWER under the orthogonal design. Although the $\lambda_{k\text{-}\mathrm{FWER}}(i)$'s are chosen with reference \citep{kFWER2}, (\ref{kFWER_model}) is not equivalent to the stepdown procedure described above. We also empirically validate this theoretical finding under different experiment settings in Section \ref{sec5}.
\end{remark}

Generally, the number of falsely selected features that people are willing to accept is directly proportional to the number of identified features. Therefore, we may no longer be concerned about $k$-FWER, but about FDP. Analogous to (\ref{kFWER_model}), the convex optimization problem of F-SLOPE is formulated as
\begin{equation}
\mathop{\arg \min}_{\beta\in \mathbb{R}^m}\frac{1}{2}\|y-X \beta\|_{l_2}^{2}+ \sigma\cdot\sum_{i=1}^m\lambda_{\mathrm{FDP}}(i)|\beta|_{(i)},\label{FDP_model}
\end{equation}
where 
\begin{equation}
\begin{aligned}
&\lambda_{\mathrm{FDP}}(i)=\Phi^{-1}(1-\frac{(\lfloor\gamma i\rfloor+1) \alpha}{2(m+\lfloor\gamma i\rfloor+1-i)}).
\end{aligned}
\end{equation}

Furthermore, the proposed selection algorithm of regularization parameters also produces the following theoretical guarantee. 

\begin{theorem}\label{our_theorem2}
In the linear model (\ref{linear}) with the orthogonal matrix $X$ and noise $\epsilon\sim N(0,\sigma^2I_n)$, given $\alpha,\gamma\in (0,1)$, the FDP of the F-SLOPE model (\ref{FDP_model}) satisfies (\ref{FDPcontrol}).
\end{theorem}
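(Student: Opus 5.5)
We would mirror the proof of Theorem \ref{our_theorem1}, reducing F-SLOPE under the orthogonal design to a statement about the ordered null statistics and then reusing the Lehmann--Romano counting argument behind Lemma \ref{theorem3}. Set $\sigma=1$, so that $\tilde y=X^T y\sim N(\beta,I_m)$. Let the first $m_0$ coordinates be the true nulls, with ordered absolute values $|\hat q|_{(1)}\geq\cdots\geq|\hat q|_{(m_0)}$. Write $\alpha_i=\frac{(\lfloor\gamma i\rfloor+1)\alpha}{m+\lfloor\gamma i\rfloor+1-i}$, so that $\lambda_{\mathrm{FDP}}(i)=\Phi^{-1}(1-\alpha_i/2)$ and hence $\mathrm{Prob}(|\hat q_j|\geq\lambda_{\mathrm{FDP}}(i))=\alpha_i$ for each null $j$.

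\textbf{Step 1 (reduction to a $k$-FWER-type event).} Decompose by the number of rejections:
$$\mathrm{Prob}(\mathrm{FDP}>\gamma)=\sum_{r=1}^m \mathrm{Prob}(V\geq \lfloor\gamma r\rfloor+1,\ R=r).$$
By Lemma 1, on $\{R=r\}$ a null is rejected iff $|\hat q_j|>\lambda_r$. So with $k_r:=\lfloor\gamma r\rfloor+1$, the event $\{V\geq k_r, R=r\}$ equals $\{|\hat q|_{(k_r)}>\lambda_r,\ R=r\}$.

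\textbf{Step 2 (decoupling).} As in (\ref{eq2}), Lemma \ref{lemma2} replaces $\{R=r\}$ by $\{\tilde R=r-1\}$, where $\tilde R$ is computed with one null coordinate deleted. Independence of that coordinate then factorizes the probability. For the $k_r$-th order statistic, Lemma \ref{lemma3} averages over the top $k_r$ nulls, exactly as in (\ref{eq3}), and yields the bound
$$\mathrm{Prob}(V\ge k_r,R=r)\lesssim \frac{m_0}{k_r}\,\alpha_r\,\mathrm{Prob}(\tilde R=r-1).$$

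\textbf{Step 3 (the Lehmann--Romano inequality).} The key inequality is that $\frac{m_0\alpha_r}{k_r}\le\alpha$ on the relevant range. On $\{V\geq k_r, R=r\}$ there are $r-V\le r-k_r$ non-null rejections, so $m_0\le m-(r-V)$. Substituting this into $\alpha_r=\frac{k_r\alpha}{m+k_r-r}$ should give $\frac{m_0\alpha_r}{k_r}\le\alpha$ once the correct count of non-nulls is used. Summing over $r$ against $\sum_r\mathrm{Prob}(\tilde R=r-1)\le1$ then gives the result.

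\textbf{Main obstacle.} The Step 2 factorization together with the $m_0$ bound is the delicate point. With a crude union bound the factor $m_0/k_r$ blows up, and Lemma \ref{theorem3} needs independence between null and non-null statistics, which holds here because the coordinates of $\tilde y$ are independent. The fallback is to mimic Lehmann--Romano directly. Let $j$ be the smallest index with $V_j\ge k_j$, where $V_j$ counts null rejections among the first $j$ rejections. At that index, at most $j-k_j$ non-nulls precede the $k_j$-th null, so that null exceeds $\lambda_j\ge\lambda$ at index $m_0+k_j-1$. One would then argue that the ordered-null event implies $|\hat q|_{(k)}>\Phi^{-1}(1-k\alpha/(2m_0))$ for some $k$, and bound that probability by $\alpha$ via the Lemma \ref{lemma3} averaging, as was done for $k$-FWER.
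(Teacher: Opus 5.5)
\paragraph{Where the proposal breaks.} Your Step~1 is correct, and it is also how the paper begins. Step~3, however, fails.

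After Step~2, the factor $\frac{m_0\alpha_r}{k_r}=\frac{m_0\alpha}{m+k_r-r}$ is a deterministic number multiplying a probability. Information ``on the event $\{V\ge k_r,R=r\}$'' can no longer be used at that point. Even on that event, the count points the wrong way: $r-V\le m-m_0$ gives $\frac{m_0\alpha_r}{k_r}\le\frac{(m-r+V)\alpha}{m+k_r-r}$, which is $\le\alpha$ only if $V\le k_r$. That is the opposite of what the event supplies.

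In fact $\frac{m_0\alpha_r}{k_r}\le\alpha$ is false whenever $r-k_r>m-m_0$ (under the global null, for every $r>k_r$). So summing your Markov bound does not give $\alpha$. Written correctly, that bound is $\frac{\alpha_r}{k_r}\sum_{j\le m_0}\mathrm{Prob}(\tilde R^{(j)}=r-1)$, because the reduced problem of Lemma~\ref{lemma2} depends on which null $j$ is deleted.

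The paper takes the same Step~1--2 route. It removes the $m_0/k_r$ factor only through its Lemma~\ref{lemma3}/$\min_{t_0\le t\le\alpha}$ step. That step amounts to treating $\mathrm{Prob}(|\hat q|_{(i)}\ge\lambda_r)$ as if it were the $i$-th largest marginal probability $\alpha_r$. For $k_r=1$ it would say $1-(1-\alpha_r)^{m_0}\le\alpha_r$, so it is not something to imitate.

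Your fallback has two further problems:
\begin{itemize}
\item At a first-crossing index $j<R$, F-SLOPE does not guarantee that the null in position $j$ exceeds $\lambda_j$. The characterization (\ref{lemma1}) only gives $|\hat q_j|>\lambda_R$, and SLOPE can select coordinates below their own rank threshold: with $\lambda=(3,1)$ and observations $(2.9,2.9)$, both are selected.
\item A union over $k$ of order-statistic events cannot be bounded by $\alpha$ through Lemma~\ref{lemma3}-type averaging.
\end{itemize}

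\paragraph{The missing idea.} Apply your Step~3 counting on the event itself, at the index $R$ with the actual count $V$. Compare with $V\alpha/m_0$ rather than with $\alpha$, and finish with Simes' inequality. Lemmas~\ref{lemma2} and \ref{lemma3} are then unnecessary.

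Let $p_j=2(1-\Phi(|\hat q_j|))$ be the null p-values. On $\{R=r,\ \mathrm{FDP}>\gamma\}$, the characterization (\ref{lemma1}) says the rejected nulls are exactly those with $p_j<\alpha_r$. Moreover $k_r\le V\le m_0$ and $r-V\le m-m_0$. Hence
\begin{equation*}
\alpha_r=\frac{k_r\alpha}{m+k_r-r}\le\frac{k_r\alpha}{m_0+k_r-V}\le\frac{V\alpha}{m_0},
\end{equation*}
where the last inequality is equivalent to $(V-k_r)(m_0-V)\ge0$.

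Thus $\{\mathrm{FDP}>\gamma\}\subseteq\{\exists v\ge1:\ p^0_{(v)}\le v\alpha/m_0\}$, where $p^0_{(1)}\le\cdots\le p^0_{(m_0)}$ are the ordered null p-values. These are i.i.d.\ uniform under the orthogonal design, so Simes' inequality gives $\mathrm{Prob}(\mathrm{FDP}>\gamma)\le\alpha$.
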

\begin{proof}
Let the number of true hypotheses be non-zero, i.e., $m_0>0$; otherwise, it is not necessary to prove it again. Given $\gamma\in(0,1)$, under the same assumptions of Theorem 4, we have the following
\begin{equation}\label{eq4}
\begin{aligned}
\mathrm{Prob}(\mathrm{FDP}>\gamma)
=\sum_{r=1}^m\mathrm{Prob}(\frac{V}{R}>\gamma\text{ and } R=r)
=\sum_{r=1}^m\mathrm{Prob}(V\geq k(R) \text{ and }R=r),
\end{aligned}
\end{equation}
where $k(R)=\lfloor{\gamma R}\rfloor+1$. We observed that the FDP control in (\ref{eq4}) is similar to the $k$-FWER in (\ref{eq1}) and the only difference is whether the value of $k$ is affected by the number of selected features. Based on Lemma \ref{lemma1} \citep{slope} and the stepdown procedure \citep{kFWER2}, we can derive 
\begin{equation}
\begin{aligned}
\mathrm{Prob}(V\geq k(R) \text{ and } R=r)
=&\mathrm{Prob}(|\hat{q}|_{(1)}\geq\lambda_r,...,|\hat{q}|_{(k(R))}\geq\lambda_r\text{ and }R=r)\\
\leq &\mathrm{Prob}(|\hat{q}|_{(k)}\geq\lambda_r \text{ and } R=r).
\end{aligned}
\end{equation}

Analogy to (\ref{eq2}) and (\ref{eq3}), we have
\begin{equation}
\begin{aligned}
\mathrm{Prob}(V\geq k(R) \text{ and } R=r)
\leq&\mathrm{Prob}(|\hat{q}|_{(k(R))}\geq\lambda_r) \cdot \mathrm{Prob}(\tilde{R}=r)\\
\leq&\frac{1}{k(R)}\sum_{i=1}^{k(R)}\mathrm{Prob}(|\hat{q}|_{(i)}\geq\lambda_r) \cdot \mathrm{Prob}(\tilde{R}=r-1).
\end{aligned}
\end{equation}

Then combined with Lemma \ref{lemma3} \citep{topk}, we have 
\begin{equation}
\begin{aligned}
\frac{1}{k(R)}\sum_{i=1}^{k(R)}\mathrm{Prob}(|\hat{q}|_{(i)}\geq\lambda_r)
=&\frac{1}{k(R)}\min_{t\geq0}\{k(R)\cdot t+\sum_{i=1}^{m_0}[\mathrm{Prob}(|\hat{q}_i|\geq\lambda_r)-t]_+\}\\
\leq&\min_{t_0\leq t\leq \alpha}\{t+\frac{p_0}{k(R)}[t_0-t]_+\},
\end{aligned}
\end{equation}
where $t_0=(k(r)+1)q/(p+k(r)+1-r)$. 

Plugging above inequalities into (\ref{eq4}) gives
\begin{equation}
\begin{aligned}
\mathrm{Prob}(\mathrm{FDP}>\gamma)
=&\sum_{r=1}^m{P}(V\geq k(R)\text{ and } R=r)\\
\leq&\sum_{r=1}^m \min_{t_0\leq t\leq q}\{t+\frac{m_0}{k(R)}[t_0-t]_+\}\mathrm{Prob}(\tilde{R}=r-1)\\
\leq &\sum_{r=1}^m\frac{k(r)q}{p+k(r)-r}\mathrm{Prob}(\tilde{R}=r-1)\leq \alpha,
\end{aligned}
\end{equation}
which completes the proof. 
\end{proof}
\begin{remark}
Theorem \ref{our_theorem1} assures the ability of FDP control for F-SLOPE under the orthogonal experiments. The latter orthogonal experiments also support the conclusion (\ref{FDP_model}), and the $k$-SLOPE model (\ref{kFWER_model}) serves as the selection mechanism for the sequence of penalty parameters. The latter orthogonal experiments also support the conclusion. Moreover, the optimization algorithms for $k$-SLOPE and F-SLOPE are the same as that for SLOPE, since they are all convex and non-smooth. More optimization details are present in Algorithm \ref{accslope}.
\end{remark}

\subsection{General Setting}
Typically, SLOPE is challenging to establish solid theoretical guarantees for the FDR control under a non-orthogonal setting \citep{slope}. Hence, $k$-SLOPE and F-SLOPE may also face degraded performance under such a general setting. Fortunately, Bogdan et al. utilized their qualitative insights to make an intuitive adjustment to the regularization parameter sequence and demonstrated its empirical effectiveness \citep{slope}. Analogously to SLOPE, we present the regularization-parameter forms of $k$-SLOPE and F-SLOPE via theoretical analysis in a general setting.

Assume that $k$-SLOPE and F-SLOPE correctly detect these features and estimate the signs of the regression coefficients accurately. Let $X_S$ and $\beta_S$ be the subset of variables associated with $\beta_i\neq 0$ and the value of their coefficients, respectively. The estimator with nonzero components is  approximated by
\begin{equation}
\hat{\beta}_S\approx(X_S^{T} X_S)^{-1}(X_S^{T}y-\lambda_S)=\hat{\beta}_{\mathrm{LSE}}-(X_S^{T} 
X_S)^{-1}\lambda_S,\label{1}
\end{equation}
where $\lambda_S=(\lambda_1,...,\lambda_{|S|})^{T}$ and $\hat{\beta}_{\mathrm{LSE}}$ is the least-squares estimator of $\beta_S$. 

Inspired by \citep{slope}, we calculate the distribution of $X_i^{T} X_S(\beta_S-\hat{\beta_S})$ to determine the specific forms of the regularization parameters for $k$-SLOPE and F-SLOPE. In light of Eq.\eqref{1}, we can derive
\begin{equation}
\mathbb{E}(\beta_S-\hat{\beta}_S)\approx (X_S^{T} X_S)^{-1}\lambda_S,
\end{equation}
and
\begin{equation}
\mathbb{E}X_i^{T} X_S(\beta_S-\hat{\beta}_S)\approx\mathbb{E}X_i^{T} X_S(X_S^{T} X_S)^{-1}\lambda_S.
\end{equation}

Under the Gaussian design where each element of $X$ is i.i.d drawn from the distribution of $\mathcal{N}(0,1/n)$, we have the following
\begin{equation}
\mathbb{E}(X_i^{T} X_S(X^{T}_S X_S)^{-1}\lambda_S)^2
=\frac{1}{n}\lambda_S^{T}\mathbb{E}(X_S^{T}X_S)^{-1}\lambda_S
=w(|S|)\cdot||\lambda_S||^2
\end{equation}
and
\begin{equation}\label{eq_42}
w(|S|)=\frac{1}{n-|S|-1},
\end{equation}
where $|S|$ is the number of elements of $S$, $i\notin S$ and \eqref{eq_42} relies on the fact that the expected of an inverse $|S|\times|S|$ Wishart matrix with $n$ degrees of freedom is equal to $I_{|S|}/(n-|S|-1)$ \citep{wishart}. 

The $k$-SLOPE begins with $\lambda_{k\mathrm{G}}=\lambda_{k\text{-}\mathrm{FWER}}(1)$. Considering the slight increase in variance, we have
\begin{equation}
\lambda_{k\mathrm{G}}(2)=\lambda_{k\text{-}\mathrm{FWER}}(2)\sqrt{1+w(2)\lambda_{k\mathrm{G}}(1)^2}.
\end{equation}

Thus, the sequence of $\lambda_{k\mathrm{G}}$ can be expressed as
\begin{equation}
\lambda_{k\mathrm{G}}(i)=\lambda_{k\text{-}\mathrm{FWER}}(i)\sqrt{1+w(i-1)\sum_{j< i}\lambda_{k\mathrm{G}}(i)^2}.\label{kgeneral}
\end{equation}

The only difference between F-SLOPE and the $k$-SLOPE is the selection of the coefficient sequence of the penalty term. Similar to (\ref{kgeneral}), F-SLOPE starts with $\lambda_{\mathrm{FG}}=\lambda_{\mathrm{FDP}}(1)$, and then
\begin{equation}
\lambda_{\mathrm{FG}}(i)=\lambda_{\mathrm{FDP}}(i)\sqrt{1+w(i-1)\sum_{j< i}\lambda_{\mathrm{FG}}(i)^2}.\label{fgeneral}
\end{equation}

If the penalty term's coefficient sequence is incremental, $k$-SLOPE and F-SLOPE are no longer convex optimization problems. Denote $k^*:=k(n,m,\alpha)$ as the subscript of global minimum, $k$-SLOPE and F-SLOPE respectively work with
\begin{equation}
\lambda_{\mathrm{kG}^{\star}}(i)= \begin{cases}\lambda_{\mathrm{kG}}(i), & i \leq k^{\star}, \\ \lambda_{k G}\left(k^{\star}\right), & i>k^{\star},\end{cases}
\end{equation}
with $\lambda_{k\mathrm{G}} (i)$ given in (\ref{kgeneral}) and 
\begin{equation}
\lambda_{\mathrm{FG}}(i)= \begin{cases}\lambda_{\mathrm{FG}}(i), & i \leq k^{\star}, \\ \lambda_{F G}\left(k^{\star}\right), & i>k^{\star},\end{cases}
\end{equation}
with $\lambda_{\mathrm{FG}}(i)$ defined in (\ref{fgeneral}). %We empirically verify the validity of these regularization parameters.
When the design matrix isn't Gaussian or its columns aren't independent, we can employ the Monte Carlo estimate of the correction \citep{hammersley1954poor} instead of $w(i-1)\sum_{j<i}\lambda(i)^2$ in the formulas (\ref{kgeneral}) and (\ref{fgeneral}). 

\subsection{Group Stepdown SLOPE}
Inspired by Group SLOPE \citep{Groupslope}, we further extend the Stepdown SLOPE idea to select entire subsets of informative features with group structures, rather than single regressors. We formulate the corresponding convex optimization problems for our new group Stepdown SLOPEs, including group $k$-SLOPE (g$k$-SLOPE) and group F-SLOPE (gF-SLOPE). Notice that g$k$-SLOPE and gF-SLOPE approaches can control g$k$-FWER and gFDP, respectively.

The control of g$k$-FWER and gFDP is defined as follows.
\begin{definition}
Given $\alpha,\gamma\in(0,1)$, the g$k$-FWER and gFDP control are define as 
\begin{equation}
\mathrm{g}k-\mathrm{FWER}=\mathrm{Prob}(Vg\geq k)<\alpha, 
\label{gkFWER control}
\end{equation}
and 
\begin{equation}
P(\mathrm{gFDP}>\gamma)=\mathrm{Prob}(\frac{Vg}{max\{Rg,1\}}>\gamma)<\alpha.\label{gFDP control}
\end{equation}
\end{definition}

Suppose that the $m$-dimensional response vector $y$ is factually generated by a linear model in (\ref{groupform}). We denote the value of $||X_{I_i}\beta_{I_i}||_2$ as the influence on the of $i$-th group response variable. We think that group $i$ is closely relevant if and only if $||X_{I_i}\beta_{I_i}||_2>0$. Therefore, the task of identifying relevant groups can be transformed into finding which elements in $\| \beta\|_{X,I}=(||X_{I_1}\beta_{I_1}||_2,...,||X_{I_t}\beta_{I_t}||_2)^{T}$ are greater than 0. To estimate the nonzero coefficients of $\| \beta\|_{X,I}$, we built two group stepdown SLOPE models, namely g$k$-SLOPE and gF-SLOPE, to control g$k$-FWER and gFDP, respectively.

The objective function of convex optimization w.r.t. g$k$-FWER is
\begin{align}
\mathop{\arg\min}_{\beta\in \mathbb{R}^m}\frac{1}{2}||y-X\beta||^2+\sigma J_{\lambda_{gk}}(W\| b\|_{X,I}),\label{gk-SLOPEeq}
\end{align}
where $X$ is an arbitrary matrix. When one considers $t$ groups containing only individual variables (i.e., singleton groups situation), all weights equal to one, and the object function (\ref{gk-SLOPEeq}) is simplified into (\ref{kFWER_model}). 
Then denote $\tilde{m}=l_1+l_2+...+l_t$ and consider the following partition $\mathbb{I}=\{\mathbb{I}_1,...,\mathbb{I}_t\}$ of $\{1,...,{t}\}$, we can derive
\begin{equation}
\begin{aligned}
&\mathbb{I}_1:=\{1,...,l_1\},\mathbb{I}_2:=\{l_1+1,...,l_1+l_2\},..., 
\text{and ~} \mathbb{I}_t:=\{\sum_{i=1}^{t-1}l_i+1,...,\sum_{i=1}^tl_i\}.
\end{aligned}
\end{equation}

We can find that $X_{I_i}$ can be decomposed into $X_{I_i}=U_iR_i$, where $U_i$ is a matrix with $l_i$ orthogonal column of a unit $l_2$ norm, whose span coincides with the space spanned by the columns of $X_{I_i}$ and $R_i$ is the corresponding matrix of a full row rank. Denote $n$ by $l$ matrix $\tilde{X}$ by setting $\tilde{X}_{\mathbb{I}_i}:=U_i$ and $c_{\mathbb{I}_i}:=R_i b_{I_i}$ for $i=1,...,t$. Then we have 
\begin{equation}
\begin{aligned}
Xb=\sum_{i=1}^t X_{I_i}b_{I_i}=\sum_{i=1}^t U_iR_ib_{I_i}=\sum_{i=1}^t\tilde{X}_{\mathbb{I}_i}c_{\mathbb{I}_i}=\tilde{X}b,
\end{aligned}
\end{equation}
and
\begin{equation}
\begin{aligned}
(b_{X,I})_i=||X_{I_i}b_{I_i}||_2=||R_ib_i||_2=||c_{\mathbb{I}_i}||_2.
\end{aligned}
\end{equation}

Then the problem (\ref{gk-SLOPEeq}) is equivalent to
\begin{equation}
\left\{\begin{array}{rl}
c^{\mathrm{gs}} & :=\underset{c}{\arg \min }\left\{\frac{1}{2}\|y-\widetilde{X} c\|_2^2+\sigma J_{\lambda_{gk}}\left(W \| c \|_{\mathbb{I}}\right)\right\} \\
c_{\mathbb{I}_i}^{\mathrm{gs}} & :=R_i \beta_{I_i}^{\mathrm{gs}}, i=1, \ldots, m
\end{array},\right.\label{sv_gk-SLOPEeq}
\end{equation}
where $\| c \|_{\mathbb{I}}:=\left(\left\|c_{\mathbb{I}_1}\right\|_2, \ldots,\left\|c_{\mathbb{I}_t}\right\|_2\right)^{\top}$. And (\ref{sv_gk-SLOPEeq}) can be regarded as the standard version of (\ref{gk-SLOPEeq}). 

In the following, we provide an intuitive theoretical analysis of the proposed two g-SLOPEs, namely g$k$-SLOPE and gF-SLOPE, which control the g$k$-FWER and gF-FWER criteria, respectively, under orthogonal group settings or, more generally, Gaussian settings.

\begin{algorithm}[!t]
\caption{Algorithms for g$k$-SLOPE}
\textbf{input}:$q\in(0,1),w_1,...,w_t > 0,p,n,m$, $l_1,...,l_m\in \mathbb{N}$;
$\lambda_i:=\overline{F}^{-1}(1-\frac{qi}{m})$ for $\overline{F}(x):=\frac{1}{m}\sum_{i=1}^mF_{w_i^{-1}\mathbf{X}_{l_i}}(x)$;

\begin{algorithmic}[0] %[1] enables line numbers
\FOR {$i\in\{2,...,m\}$}
\STATE $\lambda^S:=(\lambda_1,...,\lambda_{i-1})^{T}$
\STATE $\mathcal{S}_j:=\sqrt{\frac{n-l_j(i-1)}{n}+\frac{w_j^2\left\|\lambda^S\right\|_2^2}{n-l_j(i-1)-1}}, \quad$ for $j \in\{1, \ldots, m\}$
\STATE
if $i\leq k, \lambda_i^*:=\bar{F}_{\mathcal{S}}^{-1}\left(1-\frac{k\alpha}{2m}\right), \quad$\\
else $\lambda_i^*:=\bar{F}_{\mathcal{S}}^{-1}\left(1-\frac{k\alpha}{2(m+k-i)}\right), \quad$ where $\bar{F}_{\mathcal{S}}(x):=\frac{1}{m} \sum_{j=1}^m F_{\mathcal{S}_j w_i^{-1} \chi_{l_j}}(x)$
\STATE if $\lambda_i^*<=\lambda_{i-1}$, let $\lambda_i:=\lambda_{i}^*$. \\
Otherwise, \textbf{end for} and $\lambda_j:=\lambda_{i-1}$ for $j\geq i$
\ENDFOR
\end{algorithmic}
\label{algorithm2}
\end{algorithm}

Before the statement of the primary analysis on gk-FWER control, we recall the definition of $\mathcal{X}$ distribution and define a scaled $\mathcal{X}$ distribution.

\begin{definition}\citep{Groupslope}
A random variable $X_1$ is considered to be obeying $\mathcal{X}$ distribution with $l$ degrees of freedom, i.e. $X_1\sim \mathcal{X}_l$, when $X_1$ can be expressed as $X_1=\sqrt{X_2}$ for $X_2$ satisfying a $\mathcal{X}^2$ distribution with $l$ degrees of freedom. In addition, we will say that a random variable $X_1$ has a scaled $\mathcal{X}$ distribution with $l$ degrees of freedom and scale $S$, when $X_1$ could be expressed as $X_1:=S\cdot X_2$, for $X_2$ having a $\mathcal{X}$ distribution with $l$ degrees of freedom. For simplicity, we use the notation $X_1\sim S\mathcal{X}_l$.
\end{definition}

\begin{proposition}\label{gk-SLOPE_theorem}
In the linear model (\ref{groupform}) with t disjoint groups and the orthogonal matrix $X$, i.e. $X_{I_i}^{T}X_{I_j}=0$ for $i\neq j$. Define $w_1,...,w_t$ as the sizes of each group and given $k\in\{1,2,...,t\}$, the sequence of regularizing parameters can be expressed as 
\begin{equation} \label{sequence2}
\lambda_{gk}(i)= \begin{cases}\max_{j=1,...,t}{\frac{1}{w_i}F^{-1}_{X_{l_j}}(1-\frac{k\alpha}{2m})}, & i \leq k \\ \max_{j=1,...,t}{\frac{1}{w_i}F^{-1}_{X_{l_j}}(1-\frac{k\alpha}{2(m+k-i)})}, & i>k.\end{cases}
\end{equation}
where $F_{X_{l_j}}$ is a cumulative distribution function of $X$ distribution with $l_j$ degrees of freedom. Then any solution, $\beta^{gS}$, w.r.t. problem g$k$-SLOPE (\ref{gk-SLOPEeq}) generates the same vector $\|\beta^{gS}\|_{X,I}$. Given $\alpha\in(0,1)$, the g$k$-FWER control (\ref{gkFWER control}) holds.
\end{proposition}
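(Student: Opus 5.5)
The plan is to reduce the group problem to the scalar $k$-SLOPE analysis of Theorem \ref{our_theorem1}. The reduction goes through the standardized form (\ref{sv_gk-SLOPEeq}), working on the group norms rather than the individual coefficients.

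\textbf{Step 1: uniqueness.} Under the orthogonal group design $X_{I_i}^TX_{I_j}=0$, the matrix $\widetilde X$ built from the $U_i$ has orthonormal columns. Hence $\widetilde X^T\widetilde X=I_{\tilde m}$, and up to a constant (\ref{sv_gk-SLOPEeq}) reads
\[
\tfrac12\|\widetilde X^Ty-c\|_2^2+\sigma J_{\lambda_{gk}}(W\|c\|_{\mathbb I}).
\]
This objective is strictly convex in $c$, so $c^{gs}$ is unique. Because $(\|\beta\|_{X,I})_i=\|c_{\mathbb I_i}\|_2$, every minimizer $\beta^{gS}$ of (\ref{gk-SLOPEeq}) gives the same vector $\|\beta^{gS}\|_{X,I}$.

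\textbf{Step 2: reduction to a sorted-$\ell_1$ problem.} Write $\tilde y_{\mathbb I_i}=U_i^Ty$. For fixed group norms $r_i=\|c_{\mathbb I_i}\|_2$, the fit term is minimized by aligning each $c_{\mathbb I_i}$ with $\tilde y_{\mathbb I_i}$. The problem then becomes SLOPE on the scalar vector $(\|\tilde y_{\mathbb I_i}\|_2)_{i=1}^t$ with weighted penalty $\sigma\sum_i\lambda_i (w r)_{(i)}$. Consequently the analogues of Lemma~\ref{lemma1} and Lemma~\ref{lemma2} hold for the statistics $w_i^{-1}\|\tilde y_{\mathbb I_i}\|_2$: if $Rg=r$, a null group is selected iff $w_i^{-1}\|\tilde y_{\mathbb I_i}\|_2/\sigma>\lambda_r$, and deleting that group gives $\widetilde{Rg}=r-1$.

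\textbf{Step 3: the probability bound.} For a null group, $\tilde y_{\mathbb I_i}=U_i^T\epsilon\sim N(0,\sigma^2 I_{l_i})$, so $\|\tilde y_{\mathbb I_i}\|_2/\sigma\sim\mathcal X_{l_i}$. The statistics for distinct groups are independent, since the blocks of $\widetilde X$ are orthogonal. The max over $j$ in (\ref{sequence2}) gives, for every null group,
\[
\mathrm{Prob}\big(w_i^{-1}\|\tilde y_{\mathbb I_i}\|_2/\sigma>\lambda_r\big)\le \frac{k\alpha}{2(m+k-r)},
\]
and for $r\le k$ the bound is $k\alpha/(2m)$. I then follow the proof of Theorem~\ref{our_theorem1} line by line. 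First decompose
\[
\mathrm{Prob}(Vg\ge k)=\sum_r\mathrm{Prob}(Vg\ge k,\ Rg=r).
\]
Next bound each term by $\mathrm{Prob}(k\text{-th largest null statistic}>\lambda_r,\ Rg=r)$. Use independence together with the group version of Lemma~\ref{lemma2} to factor out $\mathrm{Prob}(\widetilde{Rg}=r-1)$. Then average over the top $k$ order statistics and apply Lemma~\ref{lemma3}. This yields a per-$r$ factor at most $k\alpha/(m+k-r)$, and summing against $\mathrm{Prob}(\widetilde{Rg}=r-1)$ gives $\alpha$.

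\textbf{Main obstacle.} The hardest step is Step 3's factorization. The event $Rg=r$ depends on the null statistic itself, and in the original argument the index set of the $k$ largest nulls is random. To handle this I would first use a union bound over which null group is the $k$-th rejected one, i.e.\ $\mathrm{Prob}(V\ge k, R=r)\le \frac1k\sum_{i\in\text{null}}\mathrm{Prob}(H_i\text{ rejected}, V\ge k, R=r)$. After that, Lemma~\ref{lemma2} can be applied to a fixed $i$. This step, and the mismatch between $m$ and $t$ in the thresholds, are where the details must be checked. I would also verify that the loose weighting $w_i^{-1}$ inside the max does not break monotonicity of $\lambda_{gk}$.
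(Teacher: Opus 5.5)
\textbf{Verdict.} Steps 1 and 2 are fine as far as they go. Below you only need the one-sided fact, which follows from the KKT conditions even with unequal $w_i$: a selected group satisfies $T_i:=w_i^{-1}\|\tilde y_{\mathbb I_i}\|_2/\sigma>\lambda_{Rg}$. Your overall route is the one the paper indicates, since it omits this proof and defers to the argument of Theorem~\ref{our_theorem1}. However, Step 3 copies that argument's chain, and the chain does not close.

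\textbf{Why Step 3 fails.} Besides the factorization you flag, the step ``average over the top $k$ order statistics and apply Lemma~\ref{lemma3}'' is invalid. Let $N_r$ count the null statistics above $\lambda_r$. Then $\sum_{i\le k}\mathrm{Prob}(|\hat q|_{(i)}\ge\lambda_r)=\mathbb{E}[\min(N_r,k)]$, which is not the top-$k$ sum of the equal marginal tails $\tau_r$. Even granting the factorization, what you can prove is Markov's bound $\mathrm{Prob}(N_r\ge k)\le t_0\tau_r/k=\frac{t_0\alpha}{2(m+k-r)}$, where $t_0$ is the number of null groups and $r>k$. This is not $\frac{k\alpha}{m+k-r}$, and it exceeds $\alpha$ once $r>m+k-t_0/2$.

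Your union-bound repair hits the same wall. It gives
\[
\mathrm{Prob}(Vg\ge k)\le\frac{\mathbb{E}[Vg]}{k}\le\frac1k\sum_{i\in\mathcal{H}_0}\mathbb{E}\big[\tau_{1+\widetilde{Rg}^{(i)}}\big],
\]
with a different leave-one-out count $\widetilde{Rg}^{(i)}$ for each null group. In Bogdan et al.'s FDR proof the analogous double sum collapses. There, $V/R$ supplies a factor $1/r$ that cancels the $r$ in $qr/m$, leaving $\sum_r\mathrm{Prob}(\widetilde R^{(i)}=r-1)=1$ for each null. Here nothing cancels. Reading $m$ as the number of groups, each summand $\tau_{1+\widetilde{Rg}^{(i)}}/k$ is at most $\alpha/t_0$ only on $\{\widetilde{Rg}^{(i)}\le m+k-1-t_0/2\}$, and it can reach $\alpha/(2k)$. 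The bound is therefore of order $t_0\alpha/(2k)$ unless you separately show that many null groups rarely enter. So ``summing against $\mathrm{Prob}(\widetilde{Rg}=r-1)$ gives $\alpha$'' does not follow.

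\textbf{The missing idea.} Index the threshold by the number of false selections rather than by $Rg$. This removes any need for Lemma~\ref{lemma2}, Lemma~\ref{lemma3}, or a factorization.
\begin{itemize}
\item On $\{Vg=s,\ Rg=r\}$ with $s\ge k$, all $s$ selected null groups have $T_i>\lambda_r$.
\item Since there are $t_1$ non-null groups, $r\le s+t_1$. As $\lambda_{gk}$ is non-increasing, at least $s$ null statistics exceed $\lambda_{t_1+s}$.
\item By the max over groups in (\ref{sequence2}), the null tail of $\lambda_{t_1+s}$ is at most $\frac{k\alpha}{2(t_0+k-s)}$. This requires the weight inside the max to be indexed by the group $j$, not the rank $i$; that choice is also what makes $\lambda_{gk}$ non-increasing.
\item Because $(s-k)(t_0-s)\ge 0$, we have $\frac{k}{t_0+k-s}\le\frac{s}{t_0}$ for $k\le s\le t_0$.
\item Hence $\{Vg\ge k\}$ is contained in the Simes event $\{\exists s:\ p_{(s)}\le s\alpha/(2t_0)\}$ for the null $p$-values $p_i=1-F_{\chi_{l_i}}(w_iT_i)$. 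Block orthogonality makes these independent and uniform, so Simes' inequality gives $\mathrm{Prob}(Vg\ge k)\le\alpha/2$.
\end{itemize}
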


Proposition \ref{gk-SLOPE_theorem} shows that when the regularization parameter value satisfies Eq.\eqref{sequence2}, g$k$-SLOPE can control g$k$-FWER, which is verified in subsequent data experiments. However, it is very challenging to make data groups orthogonal in practical applications. So we extend this model to more general situations. Similar to $k$-SLOPE, g$k$-SLOPE also faces the same problem, which may be addressed by a heuristic modification of $\lambda_{gk}$ in a general setting. This modified sequence was calculated under the assumption that the explanatory variables are randomly sampled from a Gaussian distribution. Combining this idea with \citep{Groupslope}, we propose Algorithm \ref{algorithm2} for calculating the sequence of tuning parameters when variables in different groups are independent.

The number of incorrectly selected features that people are willing to comply with is proportional to the number of features identified. Therefore, we may no longer focus on g$k$-FWER, but instead on gFDP. The objective function of gF-SLOPE is
\begin{align}
\mathop{\arg\min}_{\beta\in \mathbb{R}^m}\frac{1}{2}||y-X\beta||^2+\sigma J_{\lambda_{gF}}(W\| b\|_{X,I}).\label{gF-SLOPEeq}
\end{align}

Similar to (\ref{sv_gk-SLOPEeq}), the problem (\ref{gF-SLOPEeq}) can be equivalent to
\begin{equation}
\left\{\begin{aligned}
& c^{\mathrm{gs}}:=\underset{c}{\arg \min }\left\{\frac{1}{2}\|y-\tilde{X} c\|_2^2+\sigma J_{\lambda_{gF}}\left(W \| c \|_{\mathbb{I}}\right)\right\}, \\
& c_{\mathbb{I}_i}^{\mathrm{gs}}:=R_i \beta_{I_i}^{\mathrm{gs}}, i=1, \ldots, m.
\end{aligned}\right. \label{sv_gF-SLOPEeq}
\end{equation}

We can consider (\ref{sv_gk-SLOPEeq}) as the standard version of (\ref{gk-SLOPEeq}). In the following, we will present two sequences that provably control gk-FWER when the variables in different groups are mutually orthogonal or follow a Gaussian distribution. 

\begin{proposition}\label{gF-SLOPE_theorem}
In the linear model (\ref{groupform}) with the orthogonal matrix $X$, i.e. $X_{I_i}^{T}X_{I_j}=0$ for $i\neq j$. Define $w_1,...,w_t$ are positive numbers and given $\alpha,\gamma\in (0,1)$, the sequence of regularizing parameter can be expressed as
\begin{equation} \label{prop4}
\lambda_{gF}(i)^{\max }:=\max_{j=1, \ldots, m}\left\{\frac{1}{w_j} F_{\chi_{l_j}}^{-1}\left(1-\frac{(\lfloor\gamma i\rfloor+1) \alpha}{2(m+\lfloor\gamma i\rfloor+1-i)}\right)\right\},
\end{equation}
where $F_{X_{l_j}}$ is a cumulative distribution function of $X$ distribution with $l_j$ degrees of freedom. Then any solution ($\beta^{gS}$) w.r.t. problem gF-SLOPE (\ref{gF-SLOPEeq}) generates the same vector $\|\beta^{gS}\|_{X,I}$. Thus for given $\alpha\in(0,1)$, the gFDP control (\ref{gFDP control}) holds.
\end{proposition}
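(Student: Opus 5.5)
The plan is to reduce gF-SLOPE under group-orthogonal design to an ordinary SLOPE-type problem on the group norms, and then reuse the $k(R)$-argument from the proof of Theorem \ref{our_theorem2}, with Gaussian tails replaced by $\chi$ tails.

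\textbf{Step 1: reduction to the standard form.} First pass to the standard form (\ref{sv_gF-SLOPEeq}). Since $X_{I_i}^TX_{I_j}=0$ for $i\neq j$, the matrix $\tilde X$ has orthonormal columns. Hence, up to a constant, $\frac12\|y-\tilde Xc\|_2^2=\frac12\|\tilde y-c\|_2^2$ with $\tilde y:=\tilde X^Ty\sim N(c,\sigma^2 I_{\tilde m})$; take $\sigma=1$.

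\textbf{Step 2: uniqueness of $\|\beta^{gS}\|_{X,I}$.} The objective in $c$ is strictly convex, so $c^{gs}$ is unique. Therefore $\|c^{gs}\|_{\mathbb I}=\|\beta^{gS}\|_{X,I}$ does not depend on which minimiser $\beta^{gS}$ is chosen. This gives the first claim.

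\textbf{Step 3: reduction to SLOPE on the group norms.} For fixed norms $\|c_{\mathbb I_i}\|_2=s_i$, the fit term is minimised by taking $c_{\mathbb I_i}$ parallel to $\tilde y_{\mathbb I_i}$. The problem then becomes a one-dimensional SLOPE problem in $s$:
\[
\min_{s\ge 0}\ \tfrac12\sum_i\bigl(\|\tilde y_{\mathbb I_i}\|_2-s_i\bigr)^2+J_{\lambda_{gF}}(Ws).
\]
Substituting $u_i=w_is_i$ turns this into a weighted SLOPE problem. Group $i$ is selected iff $s_i>0$. For a null group ($c_{\mathbb I_i}=0$), the statistic $\|\tilde y_{\mathbb I_i}\|_2\sim\chi_{l_i}$, and the null statistics are independent of each other and of the remaining coordinates.

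\textbf{Step 4: transfer Lemmas 1 and 2.} Following \citep{Groupslope}, the analogues of Lemma~1 and Lemma~\ref{lemma2} carry over to this problem. Namely, on $\{Rg=r\}$ a null group $i$ is rejected iff $w_i\|\tilde y_{\mathbb I_i}\|_2>\lambda_r$. Moreover, $\{w_i\|\tilde y_{\mathbb I_i}\|_2>\lambda_r,\ Rg=r\}$ is contained in $\{w_i\|\tilde y_{\mathbb I_i}\|_2>\lambda_r,\ \tilde R=r-1\}$, where $\tilde R$ is computed with group $i$ removed and the shifted sequence $(\lambda_2,\dots)$. The choice of $\lambda_{gF}(i)^{\max}$ as a maximum over $j$ ensures, for every null group $i$,
\[
\mathrm{Prob}\bigl(w_i\|\tilde y_{\mathbb I_i}\|_2>\lambda_r\bigr)\le \frac{(\lfloor\gamma r\rfloor+1)\alpha}{2(m+\lfloor\gamma r\rfloor+1-r)}=:t_r .
\]
This is the exact analogue of $\Phi^{-1}$ in the F-SLOPE sequence. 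As in the paper, $m$ is identified with the number of groups $t$.

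\textbf{Step 5: the $k(R)$-argument.} Write
\[
\mathrm{Prob}(\mathrm{gFDP}>\gamma)=\sum_{r}\mathrm{Prob}(Vg\ge k(r),\ Rg=r),\qquad k(r)=\lfloor\gamma r\rfloor+1 .
\]
The event $Vg\ge k(r)$ means that the $k(r)$-th largest weighted null statistic exceeds $\lambda_r$. Bound this probability by the average over the top $k(r)$ order statistics, and apply Lemma~\ref{lemma3} to reach a per-$r$ factor of order $t_r$. Independence of each null group from the reduced problem then lets the probability factor as $t_r\cdot\mathrm{Prob}(\tilde R=r-1)$, exactly as in the proof of Theorem~\ref{our_theorem2}. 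Summing over $r$ and using $\sum_r\mathrm{Prob}(\tilde R=r-1)\le1$ gives the bound $\alpha$.

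\textbf{Main obstacle.} The delicate point is Step 5. Making the order-statistic/Lemma~\ref{lemma3} bound rigorous, without a union-bound loss of $m_0/k(r)$, requires mimicking Lehmann--Romano's stepdown argument. I would first try to condition on the non-null statistics and apply the Lehmann--Romano FDP bound (Lemma~\ref{theorem3}) directly to the $m_0$ independent null p-values $1-F_{\chi_{l_i}}(w_i\|\tilde y_{\mathbb I_i}\|_2)$. These p-values are super-uniform by the max-construction, and on $\{Rg=r\}$ SLOPE's rejections are dominated by a stepdown rule with thresholds $t_r$. This comparison would be established via the group analogue of Lemma~1.
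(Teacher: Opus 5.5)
Steps 1--3 are sound and follow the route the paper intends. The paper omits this proof, saying it combines the argument of Theorem~\ref{our_theorem2} with the orthogonal reduction of Brzyski et al.

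There is one slip in Step 4. Write $a_i=\|\tilde y_{\mathbb I_i}\|_2$. The optimality condition of the reduced problem is $W^{-1}(a-s)\in\partial J_\lambda(Ws)$, so the statistic compared with $\lambda_r$ is $T_i:=\|\tilde y_{\mathbb I_i}\|_2/w_i$, not $w_i\|\tilde y_{\mathbb I_i}\|_2$. Only with $1/w_i$ does the maximum over $j$ give $\mathrm{Prob}(T_i>\lambda_r)\le t_r$.

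The genuine gap is Step 5. It repeats an unjustified step of the paper's proof of Theorem~\ref{our_theorem2}, so deferring to that proof does not help. Lemma~\ref{lemma2} (the leave-one-out inclusion) and the independence you use concern a \emph{fixed} null group $i$, whose $T_i$ is independent of the data with group $i$ deleted. The event $\{Vg\ge k(r),\,Rg=r\}$ is instead about the $k(r)$-th largest null statistic. That statistic sits at a random index and depends on the null groups that remain in the reduced problem. So nothing gives a factorisation into $\mathrm{Prob}(T_{(k(r))}>\lambda_r)\,\mathrm{Prob}(\tilde R=r-1)$.

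The averaging step does not produce $t_r$ either. Write $T_{(1)}\ge T_{(2)}\ge\cdots$ for the ordered null statistics and $N_r$ for the number of null groups with $T_i>\lambda_r$. Then $\sum_{j\le k}\mathrm{Prob}(T_{(j)}>\lambda_r)=\mathbb{E}\min(N_r,k)$. For equal group sizes and $k=1$ this equals $1-(1-t_r)^{m_0}\approx m_0t_r$. Lemma~\ref{lemma3} is an identity for the top-$k$ sum of a fixed list of numbers and says nothing about probabilities of order statistics.

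The honest replacement is Markov's inequality $\mathbf{1}\{Vg\ge k\}\le Vg/k$, followed by the leave-one-out argument for each fixed null group. With $\mathcal N$ the $m_0$ null groups, this gives $\sum_r\frac{t_r}{k(r)}\sum_{i\in\mathcal N}\mathrm{Prob}(\tilde R^{(i)}=r-1)$. Here $t_r/k(r)=\alpha/\bigl(2(m+k(r)-r)\bigr)$ is at most $\alpha/(2m_0)$ only when $r\le m_1+k(r)$, where $m_1=m-m_0$. That constraint is exactly what the stepdown structure supplies in Lehmann--Romano's argument, and SLOPE does not supply it. So the $m_0/k(r)$ loss is not absorbed.

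Your fallback runs the wrong way. From the optimality conditions, every unselected group has $T_i\le\lambda_{Rg+1}$ and every selected group has $T_i>\lambda_{Rg}$. Hence gF-SLOPE selects the $Rg$ largest $T_i$, and $R_{\mathrm{SD}}\le Rg\le R_{\mathrm{SU}}$. Here $R_{\mathrm{SD}}$ and $R_{\mathrm{SU}}$ are the numbers of rejections of the stepdown and step-up rules with thresholds $\lambda_1\ge\lambda_2\ge\cdots$. This is the group version of the fact that orthogonal-design SLOPE is sandwiched between step-down and step-up.

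So gF-SLOPE rejects \emph{at least} as many groups as the Lehmann--Romano stepdown rule, and Lemma~\ref{theorem3} cannot be transferred by domination. What is missing is a step-up, Simes-type inequality for these constants that also copes with FDP not being monotone in the rejection set. Neither your proposal nor the paper provides it, so as written only the uniqueness of $\|\beta^{gS}\|_{X,I}$ is established.
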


Proposition \ref{gF-SLOPE_theorem} illustrates that when the regularization parameter value satisfies Eq.\eqref{prop4}, gF-SLOPE can control gF-FWER, which is verified in subsequent data experiments. However, it is very challenging to make data groups orthogonal in practical applications. So we extend this model to more general situations. Similar to F-SLOPE, gF-SLOPE also faces these issues, which may be addressed by a heuristic modification of $\lambda_{gF}$ in a general setting. This modified sequence was calculated under the assumption that the explanatory variables are randomly sampled from a Gaussian distribution. Based on this strategy, we designed Algorithm 3 to compute the tuning parameter sequence when variables in different groups are mutually independent. Notice that the proofs for Propositions \ref{gk-SLOPE_theorem} and \ref{gF-SLOPE_theorem} are derived by integrating our proof strategy in Theorems \ref{theorem1}, \ref{theorem2}, \ref{theorem3}, \ref{our_theorem1} and \ref{our_theorem2} to \citep{Groupslope}, thus we omit them.

\begin{algorithm}[!t]
\caption{Algorithms for gF-SLOPE}
\textbf{input}:$\alpha,\gamma\in(0,1),w_1,...,w_t>0,p,n,m$, $l_1,...,l_m\in \mathbb{N}$; 
$\lambda_i:=\overline{F}^{-1}(1-\frac{(\lfloor\gamma i\rfloor+1) \alpha}{2(m+\lfloor\gamma i\rfloor+1-i)})$ for $\overline{F}(x):=\frac{1}{m}\sum_{i=1}^mF_{w_i^{-1}\mathbf{X}_{l_i}}(x)$;

\begin{algorithmic}[0] %[1] enables line numbers
\FOR {$i\in\{2,...,m\}$}
\STATE $\lambda^S:=(\lambda_1,...,\lambda_{i-1})^{T}$
\STATE $\mathcal{S}_j:=\sqrt{\frac{n-l_j(i-1)}{n}+\frac{w_j^2\left\|\lambda^S\right\|_2^2}{n-l_j(i-1)-1}}, \quad$ for $j \in\{1, \ldots, m\}$
\STATE $\lambda_i^*:=\bar{F}_{\mathcal{S}}^{-1}\left(1-\frac{(\lfloor\gamma i\rfloor+1) \alpha}{2(m+\lfloor\gamma i\rfloor+1-i)}\right), \quad$ for $\quad \bar{F}_{\mathcal{S}}(x):=\frac{1}{m} \sum_{j=1}^m F_{\mathcal{S}_j w_i^{-1} \chi_{l_j}}(x)$
\STATE if $\lambda_i^*<=\lambda_{i-1}$, let $\lambda_i:=\lambda_{i}^*$. \\
Otherwise, \textbf{end for} and $\lambda_j:=\lambda_{i-1}$ for $j\geq i$
\ENDFOR
\end{algorithmic}
\end{algorithm}

\section{Empirical Validation}\label{sec5}
This section verifies the performance on error control of our stepdown SLOPE algorithms, e.g., F-SLOPE, $k$-SLOPE, gF-SLOPE and g$k$-SLOPE.
All experiments are implemented in R and Python on a PC equipped with an Intel(R) Xeon(R) Platinum 8175M CPU and an NVIDIA RTX A6000 GPU. The reported results are the average values obtained after repeating each experiment 100 times.

\begin{table*}[!t]
\centering
\caption{Results for controlled feature selection under the orthogonal design (different $t$ and fixed $k=5$). }
\begin{tabular}{c|ccc|ccc}
\hline
\multirow{2}{*}{t} & \multicolumn{3}{c|}{$k$-SLOPE} & \multicolumn{3}{c}{F-SLOPE} \\
   & $\mathrm{Prob}(\mathrm{FDP}>\gamma)$     & FDR     & Power   & $\mathrm{Prob}(\mathrm{FDP}>\gamma)$     & FDR     & Power   \\ \hline
50                 & 0.001    & 0.006   & 1.000   & 0.003   & 0.007   & 1.000   \\
100                & 0.000    & 0.002   & 0.998   & 0.002   & 0.005   & 1.000   \\
200                & 0.001    & 0.001   & 1.000   & 0.000   & 0.007   & 1.000   \\
300                & 0.002    & 0.001   & 1.000   & 0.000   & 0.005   & 0.995   \\
400                & 0.000    & 0.001   & 0.995   & 0.001   & 0.004   & 0.994   \\
500                & 0.000    & 0.001   & 0.997   & 0.000   & 0.005   & 0.997   \\ \hline
\end{tabular}
\label{tab1}
\end{table*}

\subsection{Experiments of Orthogonal Design (Stepdown SLOPE)}

Inspired by \citep{slope, Groupslope}, we draw the design matrix $X=I_n$ with $n=1000$. Then, we simulate the response from the linear model \begin{equation}
y=X\beta+\epsilon,~~ \epsilon\sim N(0,I_n).
\end{equation}

The number of relevant features varies within $t\in\{50,100,200,300, 400,500\}$ and the nonzero regression coefficients are equal to $3\sqrt{2\log n}$. 
We set the target FDR level $\alpha = 0.1$ and $\gamma = 0.1$ for F-SLOPE, and set $k\in\{5,10,15,20,25,30\}$ and $\alpha = 0.1$ for $k$-SLOPE. Table \ref{tab1} reports the estimation of FDR, $\rm{Prob}\mathrm\{FDP\geq \gamma\}$ and power with 100 repetitions. Figure \ref{orth_kfwer_figure} summarizes the results of SLOPE, $k$-SLOPE, and F-SLOPE in these trials. These results demonstrate that our proposed stepdown-based SLOPEs can achieve FDP, FDR, and $k$-FWER control flexibly, whereas the classical SLOPEs primarily focus on controlling the FDR criterion. Meanwhile, $k$-SLOPE and F-SLOPE also show strong performance across most empirical configurations. Furthermore, these empirical results verify the validity of Theorems 4 and 5.

\begin{figure}[!t]
\centering
\includegraphics[width=0.9\linewidth]{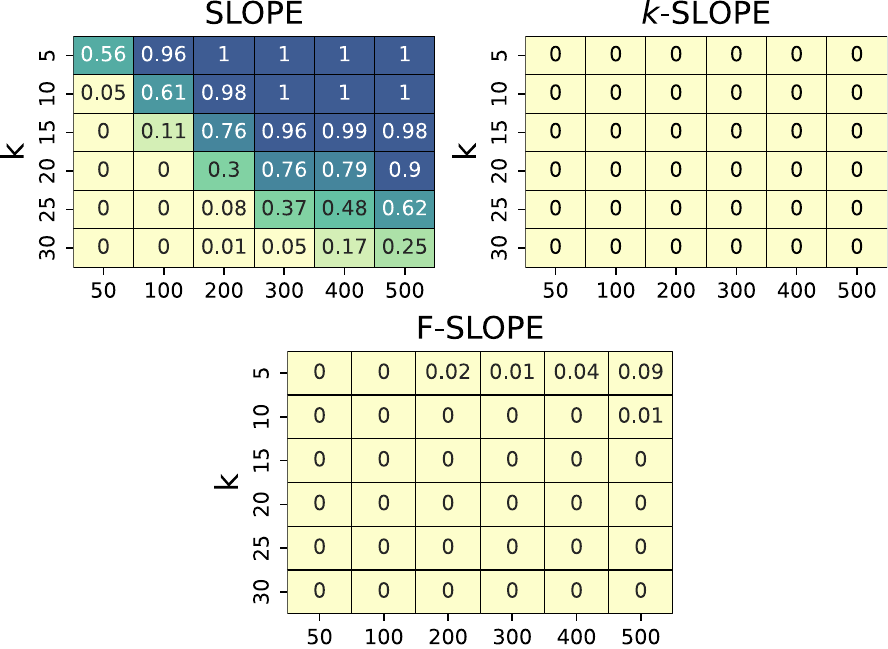}
\caption{$k$-FWER provided by different approaches for controlled feature selection under orthogonal design (with different $k$ and $t$). The value in the small square is the size of $k$-FWER. The darker the color, the larger the $k$-FWER and vice versa.}
\label{orth_kfwer_figure}
\end{figure}

\begin{figure*}[!t]
\centering
\includegraphics[width=\linewidth]{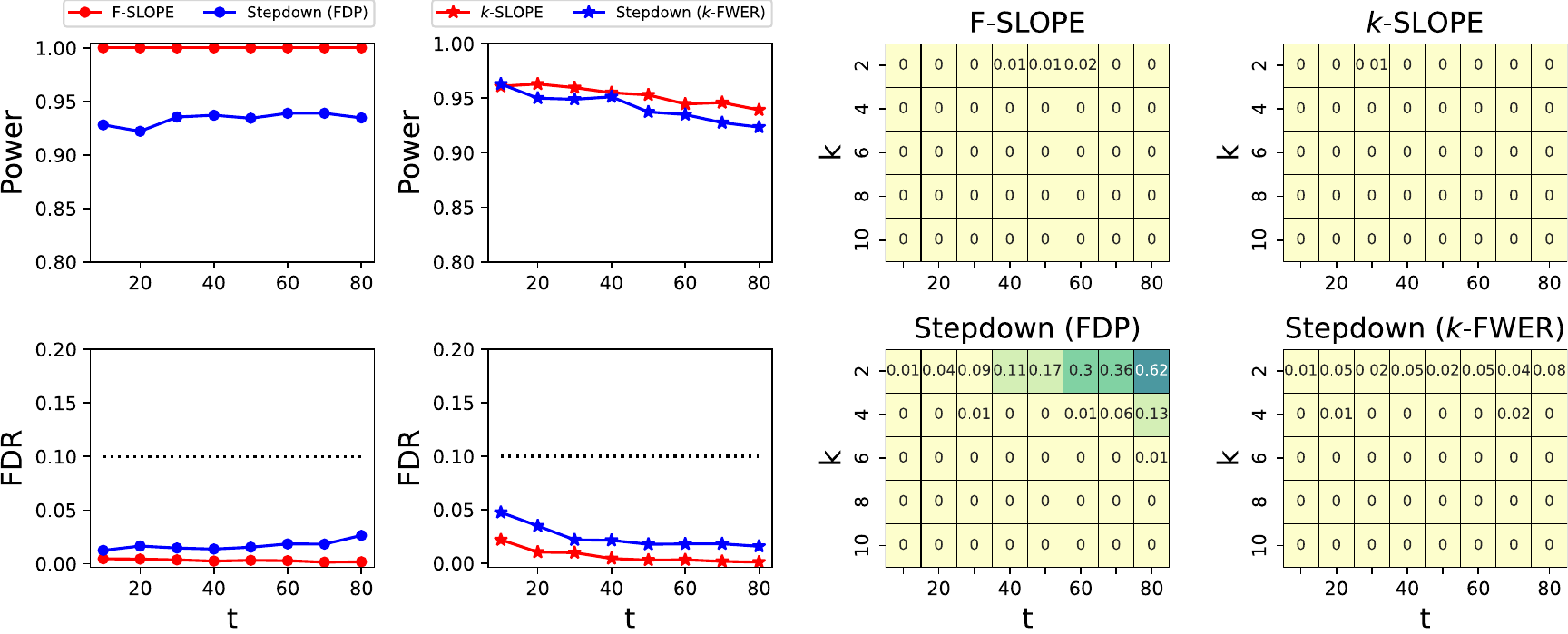}
\caption{Result for controlled feature selection on the simulated data. The black dashed lines indicate the target FDR level. Constance for $k$-SLOPE is $k=6$ in the second column (from left to right). The value in the small square is the size of $k$-FWER in the third and fourth columns (from left to right). The darker the color, the larger the $k$-FWER and vice versa. }
\label{figure2}
\end{figure*}

% \begin{figure}[!t]
% \centering
% \includegraphics[width=1\linewidth]{figure/figure3.pdf}
% \caption{ Power and FDR of F-SLOPE under Gaussian design (different $t$). The black dashed line indicates the target FDR level.}
% \label{figure3}
% \end{figure}

\begin{table*}[!t]
\centering
\caption{$\mathrm{Prob(FDP}>\gamma$) results on the simulated data for multiple mean testing ($k=6$)}\label{tab2}
% \resizebox{\linewidth}{!}{
\begin{tabular}{c|cccc}
\hline
$t$  & F-SLOPE & $k$-SLOPE & Sd (FDP) & Sd ($k$-FWER) \\ \hline
10 & 0.00    & 0.04    & 0.02       & 0.08            \\
20 & 0.03    & 0.00    & 0.00       & 0.03            \\
30 & 0.00    & 0.01    & 0.01       & 0.01            \\
40 & 0.00    & 0.00    & 0.00       & 0.00            \\
50 & 0.01    & 0.00    & 0.00       & 0.00            \\
60 & 0.00    & 0.00    & 0.00       & 0.00            \\
70 & 0.00    & 0.00    & 0.00       & 0.00            \\
80 & 0.00    & 0.00    & 0.00       & 0.00            \\ \hline
\end{tabular}
% }
\end{table*}

\begin{table*}[!t]
\centering
\caption{$k$-FWER of $k$-SLOPE ($m=2n$) on the simulated data under the weak signals }
% \resizebox{\linewidth}{!}{
\begin{tabular}{c|cccccccc}
\hline
$k/t$ & 10   & 20   & 30   & 40   & 50   & 60   & 70   & 80   \\ \hline
2   & 0.02 & 0.00 & 0.02 & 0.00 & 0.04 & 0.04 & 0.07 & 0.07 \\
4   & 0.00 & 0.00 & 0.00 & 0.01 & 0.00 & 0.00 & 0.00 & 0.01 \\
6   & 0.00 & 0.00 & 0.00 & 0.00 & 0.00 & 0.00 & 0.00 & 0.00 \\
8   & 0.00 & 0.00 & 0.00 & 0.01 & 0.00 & 0.00 & 0.00 & 0.01 \\ \hline
\end{tabular}
% }
\label{tab3}
\end{table*}

\begin{table*}[t]
\centering
\caption{FDR of F-SLOPE ($m=2n$) on the simulated data under the weak signals} 
% \resizebox{\linewidth}{!}{
\begin{tabular}{c|cccccccc}
\hline
$k/t$ & 10   & 20   & 30   & 40   & 50   & 60   & 70   & 80   \\ \hline
2   & 0.00 & 0.00 & 0.05 & 0.03 & 0.10 & 0.12 & 0.07 & 0.13 \\
4   & 0.00 & 0.00 & 0.00 & 0.01 & 0.00 & 0.00 & 0.01 & 0.06 \\
6   & 0.00 & 0.00 & 0.00 & 0.00 & 0.00 & 0.00 & 0.00 & 0.01 \\
8   & 0.00 & 0.00 & 0.00 & 0.01 & 0.00 & 0.02 & 0.00 & 0.00 \\ \hline
\end{tabular}
% }
\label{tab4}
\end{table*}

\subsection{Multiple mean testing from correlated statistics(Stepdown SLOPE)}

We exemplify the properties of our proposed methods as applied to the typical multiple testing problem with correlated test statistics. Similar to the settings in \citep{slope}, we consider the following experimental design. 
Researchers conduct $n=1000$ experiments in each of $p=5$ randomly selected laboratories. Observation results are modeled as
\begin{equation}
y_{i, j}=\mu_{i}+\tau_{j}+z_{i, j}, \quad 1 \leq i \leq n,1 \leq j \leq p,
\end{equation}
where $\tau_{j}\sim N(0,\sigma_{\tau}^2)$ are the laboratory impact factors, $z_{i, j}\sim N(0,\sigma_{z}^2)$ are the errors and are independent of each other. Our goal is to test whether $\mu_i$ is equal to 0, i.e. $H_i: \mu_i= 0,i = 1,2,...,n$. Averaging the observed values of 5 laboratories, we get the mean of the results 
\begin{equation}
\bar{y}_{i}=\mu_{i}+\bar{\tau}+\bar{z}_{i}, \quad 1 \leq i \leq n,
\end{equation}
where $\bar{y}=(\bar{y}_1,...,\bar{y}_n)^T$ is drawn independently from $\mathcal{N}(\mu,\Sigma)$, where $\Sigma_{i, i}=\frac{1}{5}\sigma_{\tau}^2=\rho$ and $\Sigma_{i, j}=\frac{1}{5}(\sigma_{\tau}^2+\sigma_{z}^2)=\sigma^2$ for $i\neq j$ \citep{slope}. 
The key problem is to determine whether the marginal means of a multivariate Gaussian correlation vector converge to zero.
One classical solution is to perform marginal tests with $\bar{y}$ statistic, which depends on the stepdown procedure to control $k$-FWER or FDP \citep{kFWER2}. In other words, we sort the $\bar{y}$ sequence with $|\bar{y}|_{(1)}\geq |\bar{y}|_{(2)}\geq\cdots |\bar{y}|_{(m)}$. Then we use the stepdown procedure with $k$-FWER or FDP critical values. Another solution is to ``whiten the noise", i.e., the regression equation is reduced to
\begin{equation}
\tilde{y}=\Sigma^{-1 / 2} \bar{y}=\Sigma^{-1/2} \mu+\epsilon,\label{whiten}
\end{equation}
where $\epsilon\sim N(0,I_p)$, $\Sigma^{-1/2}$ is the regression design matrix. 
Suppose $\Sigma^{-1/2}$ is close to the orthogonal matrix. In that case, the multiple test problem is transformed into the feature selection problem under the approximate orthogonal design, where $k$-SLOPE and F-SLOPE can provide better performance.

\begin{figure}[!t]
\centering
\includegraphics[width=1\linewidth]{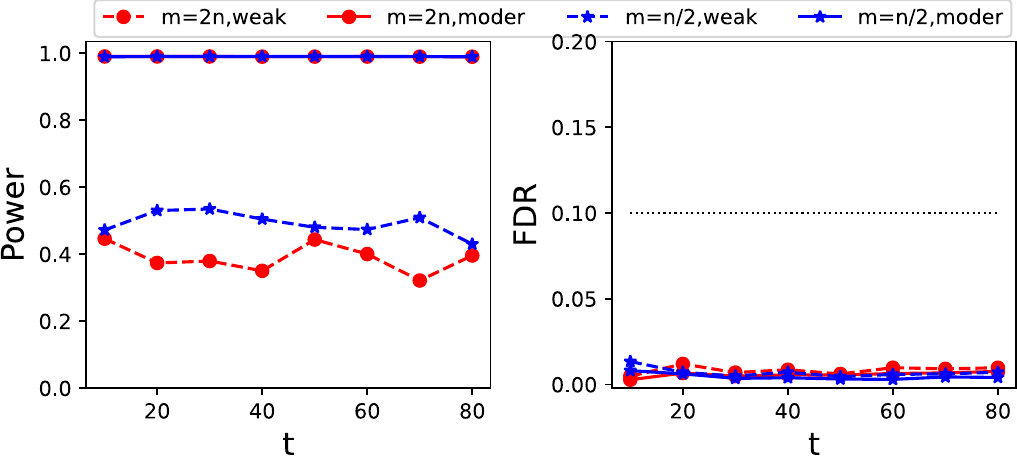}
\caption{Power and FDR of F-SLOPE under Gaussian design (different $t$). The black dashed line indicates the target FDR level.}
\label{figure3}
\end{figure}

\begin{figure}[!t]
\centering
\includegraphics[width=\linewidth]{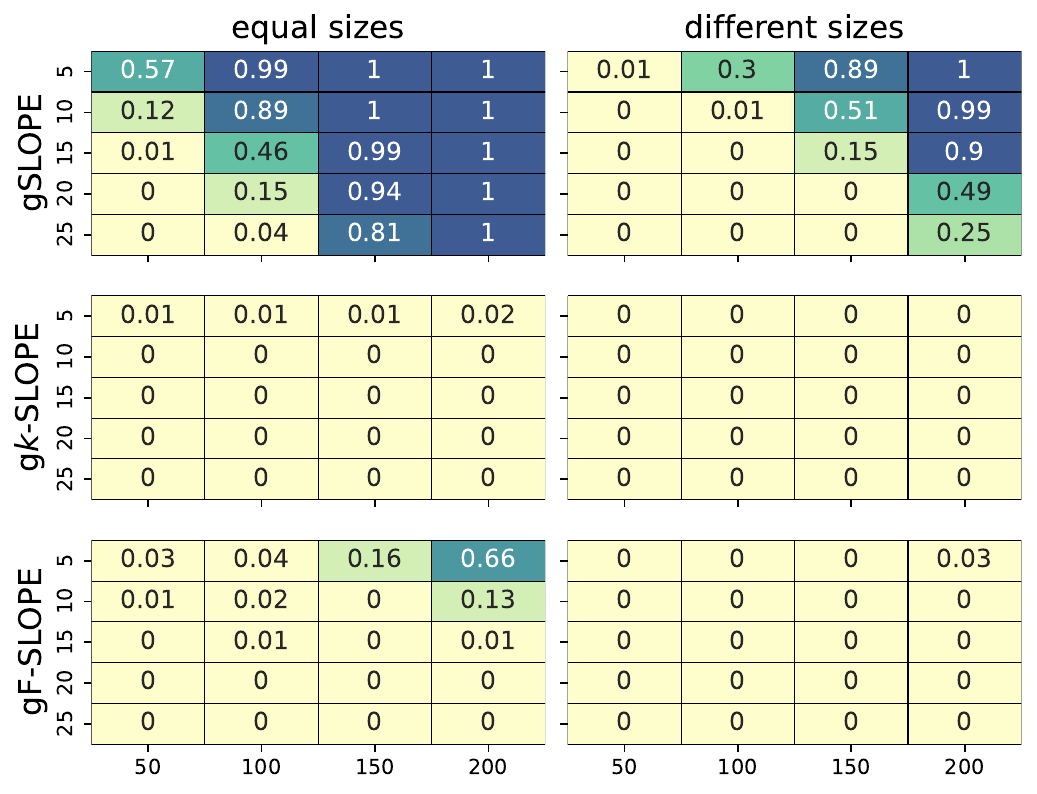}
\caption{The g$k$-FWER provided by different approaches for controlled feature selection under orthogonal design. The value in the small square is the size of g$k$-FWER. The darker the color, the larger the g$k$-FWER and vice versa.}
\label{figure7}
\end{figure}

Following the configurations in \citep{slope}, we let $\sigma_{\tau}^2=\sigma_{z}^2=2.5$ and consider a sparse setting where the number of the relevant features $t\in\{10,20,30,40,$ $50,60,70,80,90,100\}$. The nonzero mean is set to satisfy $2\sqrt{2\log{p}}/c$, where $c$ is equal to the Euclidean norm of each column of $\Sigma^{-1/2}$. We set $\alpha = \gamma = 0.1$ for all FDP controlled methods, and set $k\in\{2,4,6,8,10\}$ and $\alpha = 0.1$ for $k$-FWER controlled methods. Figure \ref{figure2} shows the FDR, $k$-FWER, and power provided by F-SLOPE, $k$-SLOPE, the stepdown procedures for FDP control (Sd(FDP)), and the stepdown procedures for $k$-FWER control (Sd($k$-FWER)). Table \ref{tab2} shows $\mathrm{Prob(FDP}>\gamma)$ for F-SLOPE, $k$-SLOPE and the stepdown procedures. These experimental results demonstrate that our proposed methods ensure simultaneous control of FDP, FDR, and $k$-FWER, whereas the variant Sd (FDP) (or Sd ($k$-FWER)) focuses on controlling FDP (or $k$-FWER) and FDR. Moreover, the proposed SLOPE approaches, F-SLOPE and $k$-SLOPE, have greater power than the stepdown procedures.
Therefore, our proposed methods outperform the classical stepdown procedures in multiple tests.

\subsection{Experiments of Gaussian Design (Stepdown SLOPE)}

We study the performance of $k$-SLOPE and F-SLOPE under a general setting. Following the strategy in \citep{slope}, let the entries of the design matrix $X$ be i.i.d. drawn from the distribution of $\mathcal{N} (0,1/n)$ with $n=5000$. The number of relevant features $t$ varies within $\{10,20,30,40,50,60,70,80\}$. Moderate signals with nonzero coefficients are set accordingly to $2\sqrt{2\log m}$, while this value is set to $\sqrt{2\log m}$ for weak signals. We set $\alpha = \gamma = 0.1$ for F-SLOPE, and set $k=\{2,4,6,8,10\}$ and $\alpha = 0.1$ for $k$-SLOPE.

\begin{figure}[!t]
\centering
\includegraphics[width=\linewidth]{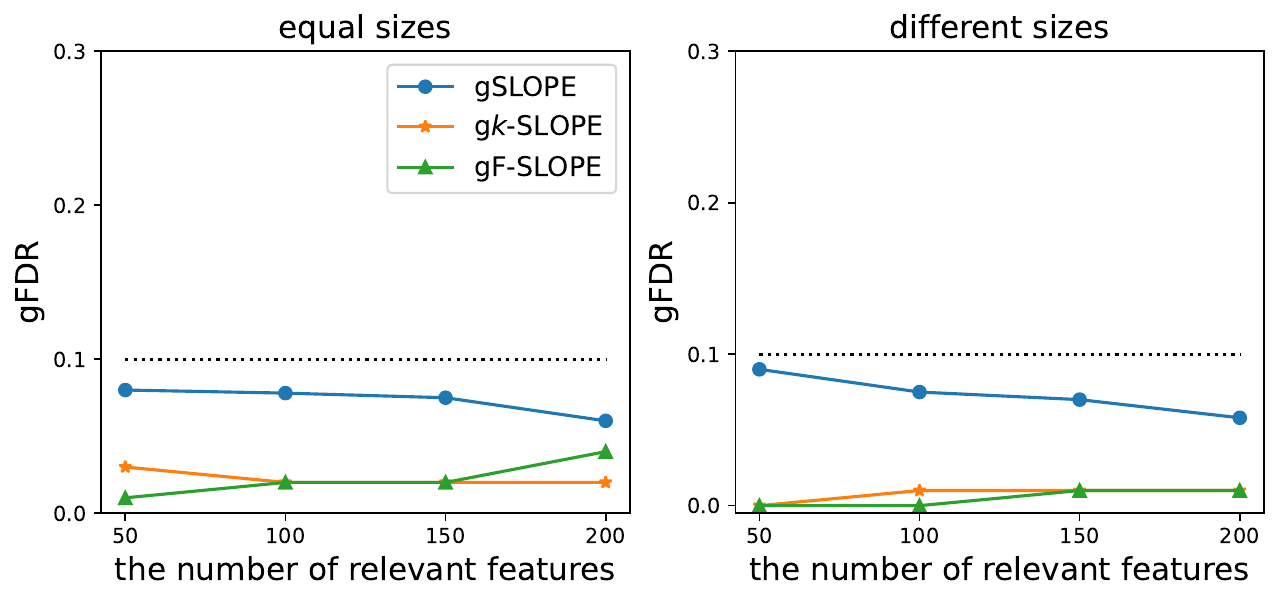}
\caption{The gFDR of g-SLOPE, g$k$-SLOPE and gF-SLOPE on the simulated data in orthogonal design when $k=15$.}
\label{figure8}
\end{figure}

\begin{table*}[!t]
\centering
\caption{\rm{Prob}(FDP$>\gamma$) of F-SLOPE on the simulated data under the weak and moderate signals (different $t$).}
% \resizebox{0.6\linewidth}{!}{
\begin{tabular}{c|cccc}
\hline
\multirow{2}{*}{$t$} & \multicolumn{2}{c}{$m = 2n$} & \multicolumn{2}{c}{$m = n/2$} \\
& weak       & moder       & weak        & moder       \\ \hline
10                 & 0.03       & 0.00        & 0.05        & 0.00        \\
20                 & 0.07       & 0.00        & 0.03        & 0.01        \\
30                 & 0.01       & 0.00        & 0.00        & 0.00        \\
40                 & 0.00       & 0.01        & 0.00        & 0.00        \\
50                 & 0.00       & 0.00        & 0.00        & 0.00        \\
60                 & 0.00       & 0.00        & 0.00        & 0.00        \\
70                 & 0.01       & 0.00        & 0.00        & 0.01        \\
80                 & 0.00       & 0.02        & 0.00        & 0.00        \\ \hline
\end{tabular}
% }
\label{tab5}
\end{table*}

\begin{table*}[!t]
\centering
\caption{\rm{Prob}(FDP$>\gamma$) of g-SLOPE, g$k$-SLOPE and gF-SLOPE on the simulated data.}
\begin{tabular}{c|cccc}
\hline
the number & 50         & 100     & 150       & 200     \\ \hline
g-SLOPE                          & 0.51 & 0.73 & 0.98 & 1.00 \\
g$k$-SLOPE                        & 0.01          & 0.00       & 0.00         & 0.01       \\
gFSLOPE                & 0.01         & 0.00       & 0.00         & 0.00      \\ \hline
\end{tabular}
\label{tab6}
\end{table*}

\begin{table*}[!t]
\centering
\caption{Power values of g-SLOPE, g$k$-SLOPE and gF-SLOPE in orthogonal design.}
\begin{tabular}{c|ccccc}
\hline
the number & 50  & 100 & 150  & 200  & 250  \\ \hline
g-SLOPE     & 0.8 & 0.9 & 0.93 & 0.95 & 0.96 \\
g$k$-SLOPE   & 0.99   & 0.98   & 1    & 0.99    & 1    \\
gF-SLOPE   & 1   & 0.99   & 1    & 1    & 0.98    \\ \hline
\end{tabular}
\label{tab7}
\end{table*}

Then we consider two scenarios: (1) $m=2n$; (2) $m=n/2$. Table \ref{tab4} and Figure \ref{figure3} illustrate that F-SLOPE keeps the $\mathrm{Prob(FDP}>\gamma)$ and FDR below the nominal level under both scenarios ($m=2n$ and $m=n/2$), whether the signals are weak and moderate. Meanwhile, Figure \ref{figure3} also shows F-SLOPE ($m=n/2$) has greater power than F-SLOPE ($m=2n$) under weak signals, while F-SLOPE ($m=2n$) and F-SLOPE ($m=n/2$) have similar power under the moderate signals. As shown in Table \ref{tab3}, $k$-SLOPE control $k$-FWER under both scenarios ($m=2n$ and $m=n/2$). In addition, the power of $k$-SLOPE also performs well with moderate signals. Moreover, experimental results verify the effectiveness of $k$-SLOPE with $\lambda_{k\mathrm{G}^*}$ and F-SLOPE with $\lambda_{\mathrm{FG}^*}$. 
\subsection{Experiments of Orthogonal Design 
(Group-wise Stepdown SLOPE)}

\begin{figure}[!t]
\centering
\includegraphics[width=\linewidth]{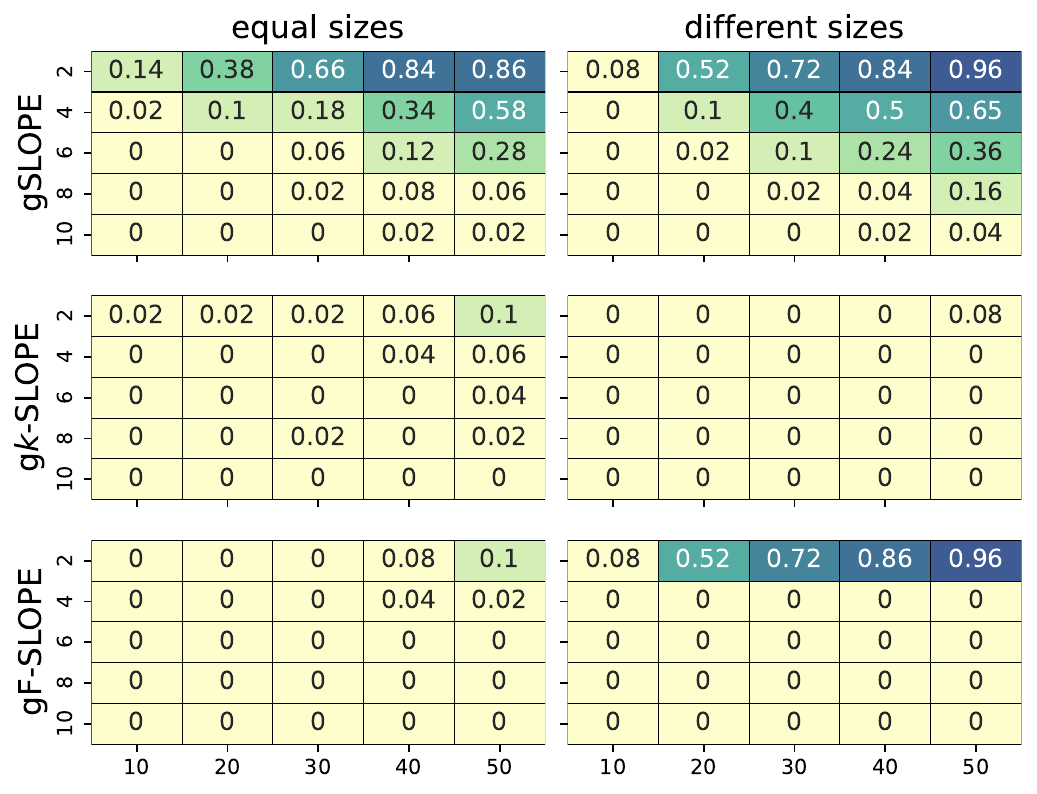}
\caption{The g$k$-FWER provided by different approaches for controlled feature selection under orthogonal design. The value in the small square is the size of g$k$-FWER. The darker the color, the larger the g$k$-FWER and vice versa.}
\label{figure9}
\end{figure}

Inspired by \citep{Groupslope}, we have designed the following experiments under orthogonal settings. Let the design matrix $X$ be the $p$ identity matrix, i.e., $X=I_p$ with $p=5000$. Within each important feature group, each element in the $\beta_{I_i}$ is generated independently under uniform distribution of $U[0.1,1.1]$, and the $\beta_{I_i}$ is scaled, that is, $(\beta_{X,I_i})=a\sqrt{l_i}$, where $a$ needs to satisfy $$\frac{1}{t} \sum_{i=1}^t a \sqrt{l_i}=\frac{1}{t} \sum_{i=1}^t \sqrt{4 \ln (t) /\left(1-t^{-2 / l}\right)-l}.$$ 
The number of relevant features is set to vary within $\{50, 100, 200, 300, 400, 500\}$. Then, we simulate the response from the linear model 
\begin{equation}
y=\sum_{i=1}^t X_{I_i}\beta_{I_i}+\epsilon,~~ \epsilon\sim N(0,I_p).\label{experiment1}
\end{equation}

For group settings, we consider two scenarios:

(1) There exist 1000 groups of explanatory variables with the exact group sizes of $l = 5$;

(2) There exist 1000 groups, which are equally divided into five categories with different group sizes of $\{3, 4, 5, 6, 7\}$.

We compare the three methods of group SLOPE, g$k$-SLOPE and gF-SLOPE under orthogonal conditions. We set the target FDR level $\alpha= 0.1$ and $\gamma = 0.1$ for gF-SLOPE, and $k = \{5, 10, 15, 20, 25, 30\}$ and $\alpha = 0.1$ for g$k$-SLOPE.

Figures \ref{figure7} and \ref{figure8} illustrate that although g-SLOPE, gk-SLOPE, and F-SLOPE are all successful in managing gFDR, gk-SLOPE surpasses the others in regulating the g$k$-FWER criterion, while g-SLOPE distinctly lacks this potential. Table \ref{tab6} shows that gk-SLOPE and gFSLOPE outperform g-SLOPE in controlling gFDP when $k=15$. In Table \ref{tab8}, it is evident that k-SLOPE and F-SLOPE demonstrate more power than g-SLOPE irrespective of the number of crucial features. 
In short, our proposed approaches are validated to be significantly advantageous in managing g$k$-FWER, gFDP, and power, thereby minimizing false feature selection and maximizing the selection of authentic features, particularly in the orthogonal scenario.

\begin{figure}[!t]
\centering
\includegraphics[width=\linewidth]{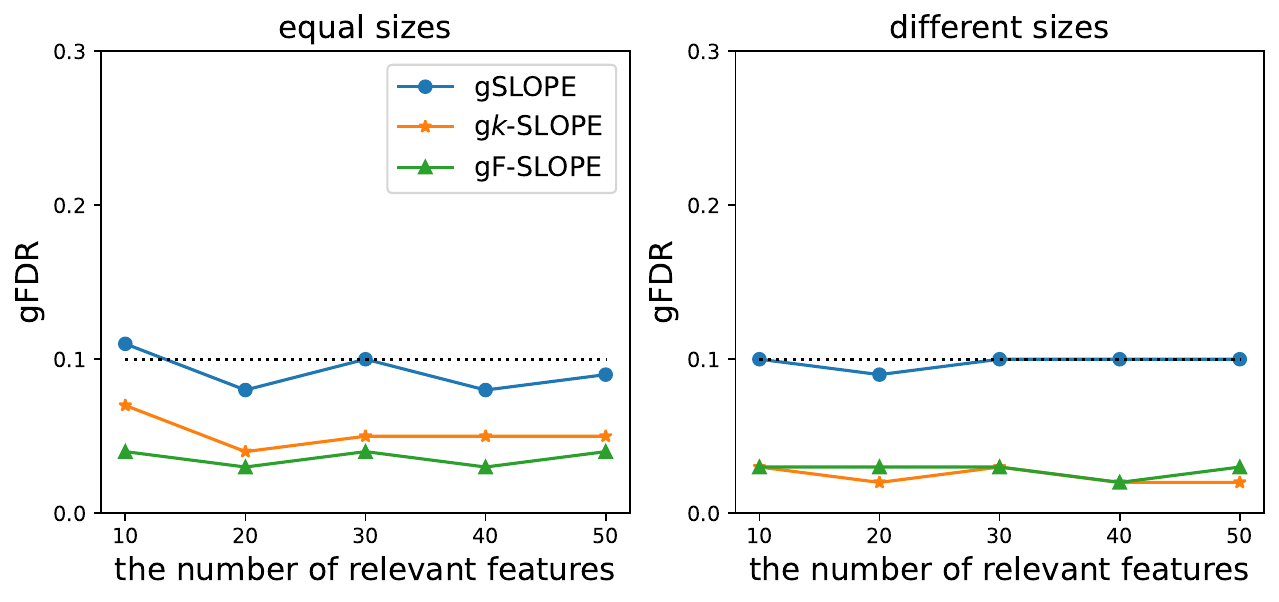}
\caption{The gFDR of g-SLOPE, g$k$-SLOPE and gF-SLOPE on the simulated data as $k=6$.}
\label{figure10}
\end{figure}

\begin{table*}[!t]
\centering
\caption{Results of \rm{Prob}(FDP$>\gamma$) on the simulated data. The values in different group configurations are shown in parentheses.}
% \resizebox{\linewidth}{!}{
\begin{tabular}{c|ccccc}
\hline
Size & 10         & 20         & 30         & 40         & 50         \\ \hline
g-SLOPE                         & 0.42(0.32) & 0.28(0.38) & 0.46(0.48) & 0.24(0.44) & 0.48(0.56) \\
gk-SLOPE                        & 0.16(0.00)    & 0.08(0.00)    & 0.12(0.02) & 0.02(0.00)    & 0.1(0.00)     \\
gF-SLOPE                        & 0.16(0.06) & 0.08(0.00)    & 0.12(0.00)    & 0.02(0.00)    & 0.1(0.00)     \\ \hline
\end{tabular}
% }
\label{tab8}
\end{table*}

\subsection{Experiments of Gaussian Design (Group Stepdown SLOPE)}

To evaluate the efficiency of g-SLOPE, g$k$-SLOPE, and F-SLOPE under a Gaussian design, we conducted simulations with a sample size of $n = 5000$ and $t = 1000$. For every level of sparsity and the gFDR, g$k$-FWER, and gFDP level of 0.1, during each iteration, we created entries of the design matrix using the Gaussian distribution $\mathcal{N}(0,1/n)$. Afterwards, we standardized $X$ and generated response variable values by model (\ref{experiment1}) using $\sigma= 1$. The signals were generated like the simulations presented in the orthogonal design. The number of relevant features varies within $\{10,20,30,40,50,60\}$. The grouping situation is the same as under an orthogonal design. We compare the three methods of group SLOPE, g$k$-SLOPE, and gF-SLOPE under Gaussian conditions.

Figures \ref{figure9} and \ref{figure10} illustrate that although g-SLOPE, gk-SLOPE, and F-SLOPE are all successful in managing gFDR, g$k$-SLOPE and gF-SLOPE surpass the others in regulating the g$k$-FWER criterion, while g-SLOPE distinctly lacks this potential. Table \ref{tab8} shows that gk-SLOPE and gFSLOPE outperform g-SLOPE in controlling gFDP when $k=6$. In Table \ref{tab9}, it is evident that k-SLOPE and F-SLOPE present more power than g-SLOPE irrespective of the number of crucial features. Briefly, the proposed new g-SLOPEs, including g$k$-SLOPE and gF-SLOPE, are shown to be significantly advantageous for controlling g$k$-FWER, gFDP, and power, thereby minimizing false feature selection and maximizing the selection of authentic features, particularly in the orthogonal scenario.

\begin{table*}[t]
\centering
\caption{Results of power values under general design.}
\begin{tabular}{c|ccccc}
\hline
Size & 50  & 100 & 150  & 200  & 250  \\ \hline
g-SLOPE     & 0.68 & 0.87 & 0.9 & 0.92 & 0.93 \\
g$k$-SLOPE   & 1   & 1   & 1    & 1    & 1    \\
gF-SLOPE   & 1   & 1   & 1    & 1    & 1    \\ \hline
\end{tabular}
\label{tab9}
\end{table*}

\subsection{Application on Real-world ADNI data}

\begin{table*}[!t]
\centering
\caption{Results on real-world ADNI data.}
\begin{tabular}{c|ccccc}
\hline
Method & MSE ($\times10^{-2}$)  & Time (s)  & gk-FWER  & Prob(gFDP$\geq$0.1)  \\ \hline
g-SLOPE &  2.84 ± 0.21  & 2.34 ± 0.82 & 0.82 & 0.48 \\
Group Lasso     & 2.79 ± 0.19 & 0.82 ± 0.02 &  0.91 &  0.63 \\
g$k$-SLOPE   & 2.61 ± 0.15   & 1.51 ± 0.43   & 0.12  & 0.17    \\
gF-SLOPE   & 2.60 ± 0.14   & 1.59 ± 0.47   & 0.19   & 0.11    \\ \hline
\end{tabular}
\label{adni}
\end{table*}

We evaluate gk-SLOPE and gF-SLOPE on the Alzheimer’s Disease Neuroimaging Initiative (ADNI) clinical data (downloaded from \url{http://adni.loni.usc.edu}), which contains prior expert knowledge on group structures and crucial features. To formulate the group modeling framework, we define feature groups based on the variable's structural categories or prior domain knowledge. For the ADNI dataset, features are partitioned according to their prior known group structure \citep{liu2019fused, zhang2023group, zhou2025interpretable}. The dataset comprises 116 variable groups: 46 groups contain a single covariate (subcortical volume, SV), while the remaining 70 groups each include four covariates: cortical thickness average (TA), thickness standard deviation (TS), surface area (SA), and cortical volume (CV) \citep{chen2026maximum}. The inherent grouping structure provides a valuable framework for our analytical approach and serves as a fundamental organizational principle for the neuroimaging data utilized in this investigation.

Table \ref{adni} validates the effectiveness and flexibility of our proposals on unknown distributions of realistic applications.
Overall, the empirical evidence supports the theoretical claims. The proposed stepdown regularization scheme achieves stringent control of gk-FWER and gFDP without the apparent power degradation typically associated with conservative multiple-testing corrections.

\section{Conclusion} \label{sec6}

This paper formulated new feature selection models based on the SLOPE technique \citep{slope}. Unlike existing works on FDR control, the current models focus on $k$-FWER control and FDP control for feature selection. Furthermore, we extend the proposed approaches for selecting grouped features, thereby further improving the classical g-SLOPE with guarantees of $k$-FWER and FDP control. With the help of the stepdown procedure \citep{kFWER2}, we established their theoretical guarantees under the orthogonal design. This work substantially broadens the scope and utility of SLOPE-based methods for rigorous feature selection across diverse error-control criteria and group-structural constraints.

\bibliography{sn-bibliography}% common bib file
%% if required, the content of .bbl file can be included here once bbl is generated
%%\input sn-article.bbl

\end{document}